\documentclass[12pt,draftcls,peerreview,onecolumn]{IEEEtran}

\usepackage{amsmath, graphics,amssymb,epsfig,subfigure,color,amsthm}

\usepackage{algorithm}
\usepackage{algorithmic}
\usepackage{float}
\usepackage{multirow}
\usepackage{cuted,flushend}
\usepackage{midfloat}
\newtheorem{theorem}{Theorem}

\newtheorem{lemma}{Lemma}

\newtheorem{corollary}{Corollary}

\usepackage{cite}

\def\b{\mathbb}

\def\b{\mathbb}

\def\d{{\rm d}}

\def\sinc{{\rm sinc}}

\begin{document}
\title{ Spectral Efficiency Scaling Laws in\\ Dense Random Wireless Networks with\\ Multiple Receive Antennas }
\author{\normalsize  Namyoon Lee, Francois Baccelli, and Robert W. Heath Jr. \bigskip
\\
\normalsize Wireless Networking and Communications Group\\
\normalsize Department of Electrical and Computer Engineering
\\ \normalsize The University of Texas at
Austin, Austin, TX 78712 USA\\
      { \normalsize E-mail~:~\{namyoon.lee, rheath\}@utexas.edu}, baccelli@math.utexas.edu  }

\date{}
\maketitle
\begin{abstract}
This paper considers large random wireless networks where transmit-and-receive node pairs  communicate within a certain range while sharing a common spectrum. By modeling the spatial locations of nodes based on stochastic geometry, analytical expressions for the ergodic spectral efficiency of a typical node pair are derived as a function of the channel state information available at a receiver (CSIR) in terms of relevant system parameters: the density of communication links, the number of receive antennas, the path loss exponent, and the operating signal-to-noise ratio. One key finding is that when the receiver only exploits CSIR for the direct link, the sum of spectral efficiencies linearly improves as the density increases, when the number of receive antennas increases as a certain super-linear function of the density. When each receiver exploits CSIR for a set of dominant interfering links in addition to the direct link, the sum of spectral efficiencies linearly increases with both the density and the path loss exponent if the number of antennas is a linear function of the density. This observation demonstrates that having CSIR for dominant interfering links provides a multiplicative gain in the scaling law. It is also shown that this linear scaling holds for direct CSIR when incorporating the effect of the receive antenna correlation, provided that the rank of the spatial correlation matrix scales super-linearly with the density. Simulation results back scaling laws derived from stochastic geometry.
\end{abstract}
 
\section{Introduction}

There is an increasing need for direct communication  between wireless device pairs to support proximity-based social networking
applications or media sharing \cite{Doppler1,Corson,Lin
}. As the number of device pairs increases, the coexistence of multiple communication links in the shared spectrum is however challenging due to mutual interference, which poses fundamental limitations on the throughput. One of the main difficulties is that in many cases, such as device-to-device (D2D) \cite{Doppler1,Corson,Lin,Doppler_PC,Namyoon_PC,Geometry_scheduling,FlahsLinQ,ITLinQ} and mobile ad hoc network \cite{Weber1,Baccelli,Hasan,Nguyen,Weber_SIC,Jindal_SIC_JD,Zhang_SIC}, the communication links cannot be coordinated in a centralized way due to the amount of signaling overhead associated with coordination. This has raised the need for distributed interference management with low signaling overheads.

Two main distributed interference management approaches have been proposed in the context of such networks: 1) distributed power control techniques and 2) distributed link scheduling algorithms. In \cite{Doppler_PC}, simple yet heuristic power control methods were proposed to regulate transmit power to mitigate interference between links. Optimal distributed on-off power control strategies were proposed to maximize the transmission capacity \cite{Jindal_PC}, coverage probability \cite{Zhang_PC}, and spectral efficiency for D2D networks \cite{Namyoon_PC}. The main limitation in\cite{ Doppler_PC,Jindal_PC,Zhang_PC,Namyoon_PC
} is that the power control methods are only effective when the number of links per unit area is small.

Distributed link scheduling has also recently received much attention. In the context of ad hoc and wireless local area networks, various distributed scheduling mechanisms for interference management  have been proposed in the literature, such as ALOHA type medium access control (MAC) protocols (e.g., \cite{Weber1,Baccelli,Hasan}), random sequential adsorption MAC protocols \cite{Nguyen}, and distributed scheduling by channel thresholding \cite{Weber2}. The main limitation of these approaches is the inefficient network spatial packing  resulting from the underlying interference avoidance strategies. By leveraging interference cancellation techniques at the receiver, advanced distributed scheduling mechanisms have also been proposed to increase the spatial packing performance in\cite{ Weber_SIC,Jindal_SIC_JD,Zhang_SIC}.  

Recently, more sophisticated distributed scheduling mechanisms were proposed in the context of D2D networks \cite{Geometry_scheduling,FlahsLinQ,ITLinQ}. In \cite{Geometry_scheduling}, a geometric scheduling method was  proposed where the exclusion regions  between different D2D links are created based on link geometries. A signal-to-interference ratio (SIR) based
distributed scheduling method called FlashLinQ was  proposed in \cite{FlahsLinQ}, where the exclusion regions are dynamically created based on link priorities and SIRs. This scheduling algorithm was shown to provide a better throughput than that of preexisting MAC protocols. Leveraging the optimality condition of treating interference as noise (TIN) in \cite{TIN_paper}, an information theoretic independent set scheduling algorithm was proposed called ITLinQ \cite{ITLinQ}, which achieves optimal sum rate performance for constant rate loss.
More elaborate distributed scheduling mechanisms in \cite{Geometry_scheduling,FlahsLinQ,ITLinQ}
may appear to yield much higher throughput, but the induced communication overheads in handshaking processes need to be subtracted, and the net gain compared to a simple ALOHA scheduling method may not be large enough when the density of node pairs is  sparse.

In this paper, we use multiple antennas to perform distributed interference management \cite{Jindal_PZF, Huang,Hunter_MIMO,Akoum}. Multi-antenna communication techniques provide an effective approach to mitigate interference because of their large gains in terms of channel capacity and reliability. In the context of ad hoc networks modeled by stochastic geometry, upper and lower bounds were obtained on the transmission capacity when multiple antennas are employed at transceivers in \cite{Jindal_PZF, Huang,Hunter_MIMO,Akoum}. In particular, interference cancellation techniques using multiple receive antennas were shown to substantially increase the transmission capacity of ad hoc networks  \cite{Jindal_PZF,Huang}. For example, by leveraging the idea of partial zero-forcing in \cite{Jindal_PZF}, it was shown that the transmission capacity increases with the node density linearly using the multiple receive antennas. Continuing in the same spirit yet with a different perspective, we analyze the benefits of using multiple antennas at receivers from a spectral efficiency point-of-view. Unlike the transmission capacity that measures the spatial density of successful transmissions per unit area, subject to a given outage probability constraint, in this paper, we consider the \emph{ergodic spectral efficiency} as a performance metric. The key limitation of transmission capacity is that the rate target is fixed, implying that the rate adaptation techniques cannot be applied over different fading realizations. Whereas, the latter measures the achievable Shannon transmission rates per unit area that averaging the rate over the different fading realizations. Arguably, this quantity is more appropriate than the transmission capacity in contemporary wireless systems where a coded packet is transmitted over multiple fading realizations \cite{Angel_Jindal}.  

We consider a dense wireless network whose topology is modeled by means of a homogeneous Poisson point process (PPP) with node density $\lambda$. Such a random PPP model captures the irregular spatial structure of mobile node locations and helps to analytically quantify the interference. We summarize our main contributions as follows:

\begin{itemize}
\item As a starting point, we first consider the case where each receiver exploits CSIR for the direct link. Applying  maximum ratio combining (MRC)\cite{Tse}, we derive an exact analytical expression for the ergodic spectral efficiency in the network as a function of 1) the density of wireless links $\lambda$, 2) the number of receive antennas $N_{\rm r}$, 3) the path loss exponent $\alpha$, and 4) the operating signal-to-noise ratio (SNR). By deriving a tight lower and upper bound on the sum spectral efficiency, we show that the ergodic spectral efficiency scales with respect to the density as $\Theta(\lambda\log_{2}\left(1+\lambda^{\beta-\frac{\alpha}{2}}\right))$ when $N_{\rm r} =c\lambda^{\beta}$ with some $c>0$ and $\alpha>2$.

\item Next, we consider the case in which each receiver has perfect knowledge of the CSIR of the nearest interfering links in addition to the direct link; this will be referred to as local CSIR below. Under this assumption, we derive an exact analytical expression of the ergodic spectral efficiency attained by zero-forcing based successive interference cancellation (ZF-SIC) in terms of the relevant system parameters. By deriving a lower and an upper bound with closed forms on the sum spectral efficiency, we also demonstrate that the ergodic spectral efficiency scales with both the density of the links and the path-loss exponent, $\Theta(\lambda\log_{2}\left(1+\lambda^{\frac{\alpha}{2}(\beta-1)}\right))$ when $N_{\rm r} =c\lambda^{\beta}$ with some $c>0$ and $\alpha>2$.

\item We analyze the effects of receive antenna correlation and of a bounded path-loss function. An analytical expression of the lower bound on the sum spectral efficiency is derived as a function of the eigenvalues of a spatial correlation matrix when direct CSIR is known. A simple lower bound with a closed form  reveals that a linear scaling is still achievable with direct CSIR, provided the rank of the spatial correlation matrix scales in an appropriate super-linear way with the density. Furthermore, we find a sufficient condition for the number of receive antennas required to attain the linear scaling law with the direct CSIR when a bounded-path loss function is considered in the network. 
\end{itemize}

\begin{figure}
\centering
\includegraphics[width=5.5in]{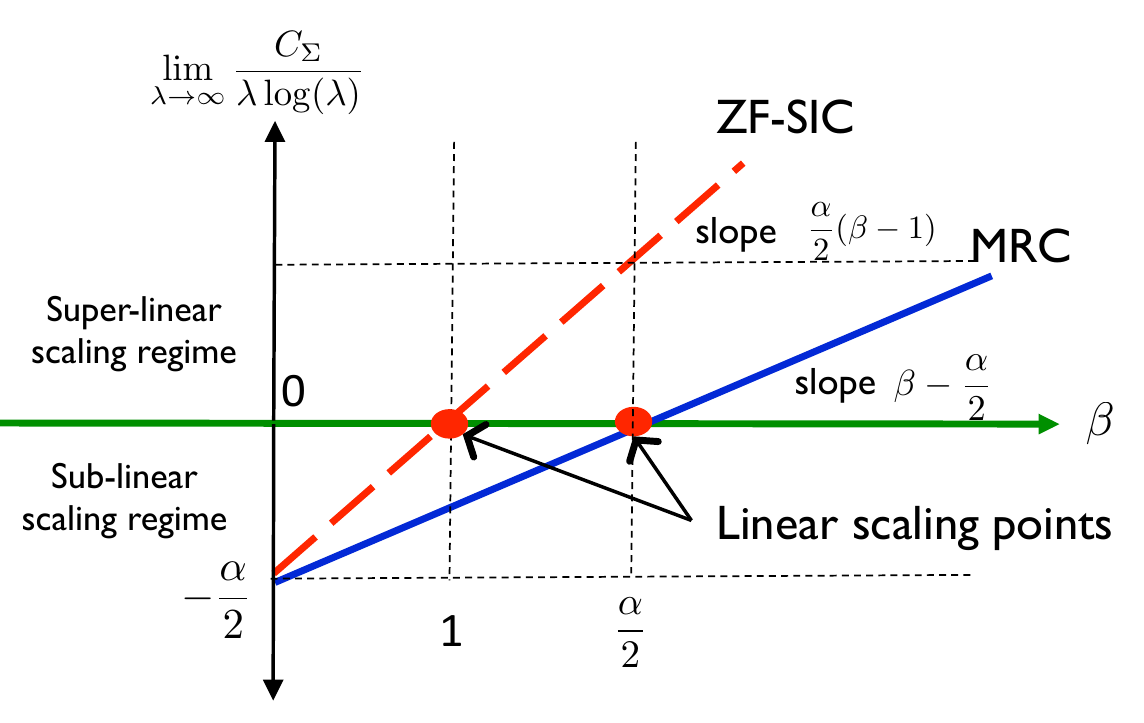} 
\caption{Asymptotic scaling behavior of the sum spectral spectral efficiency for ZF-SIC and MRC. When $\beta=1$ and $\beta=\frac{\alpha}{2}$, respectively, ZF-SIC and MRC achieve a linear growth of the sum spectral efficiency with respect to the density. If $\beta$ is less (resp. more) than the point that provides the linear growth, then the sum spectral efficiency increases sub-linearly (resp. decreases super-linearly) with the density. }
\label{fig:0}
\end{figure}

The exact expressions and scaling laws for the ergodic spectral efficiency are new findings. The capacity scaling result with the direct CSIR is partially aligned with the observation from a transmission capacity framework in \cite{Jindal_PZF,Huang,Hunter_MIMO}, where the linear scaling law of the transmission capacity is attained using MRC when the number of receive antennas scales in a certain super-linearly way. Our analysis confirms that this result holds from a sum spectral efficiency point-of-view and generalize to the case where the number of antenna scales with the density in a polynomial function with an arbitrary degree, i.e., $N_{\rm r}=c\lambda^{\beta}$, as illustrated in Fig. \ref{fig:0}. When the bounded path loss function is applied, we show that a  linear growth of the number of antennas is enough to maintain the linear capacity scaling, which is a new observation. Furthermore, our scaling result with local CSIR somewhat differs from the prior work  \cite{Jindal_PZF,Huang}, where the transmission capacity only scales with the density when the receivers can cancel interference from a set of nearest interferers while maximizing the desired signal power. Unlike this result, our analysis reveals that, when using local CSIR, the linear scaling law of the sum spectral efficiency is further improved with a multiplicative gain induced by the path loss exponent, as depicted in Fig. \ref{fig:0}. Further, it is shown that the antenna correlation degrades the sum spectral efficiency, especially when the condition number of the spatial correlation matrix is large, i.e., for highly correlated channels. Nevertheless,  linear scaling is still attainable with direct CSIR if the rank of the correlation matrix increases super-linearly with the density.

The paper is organized as follows. Section \ref{Section2} explains the network model and provides the performance metric. In Section \ref{Section3}, analytical expressions for the ergodic spectral efficiency are derived when only CSIR for the direct link is known. The case with local CSIR is analyzed in Section \ref{Section4}. Section \ref{Section5} provides analytical expressions for the sum spectral efficiency when antenna correlation and a bounded path loss function are incorporated. In Section \ref{Section6}, we provide conclusions and a discussion of future work.


\section{Model} \label{Section2}
In this section, we first describe network and signal models used in this paper. Then, we introduce the performance metrics.

\subsection{Network Model}  
 We consider a large random network where multiple transmit-and-receive pairs communicate in a common shared spectrum. We assume that the transmitters $\{ {\bf d}_k^{{\rm tx}}, k\in\mathbb{N} \}$ are distributed in the two-dimensional plane according to a homogeneous PPP $\Phi$ with density $\lambda$. The location at ${\bf d}_k^{\rm rx}$ of the receiver associated with the transmitter ${\bf d}_k^{{\rm tx}}$ is uniformly distributed 
 in the area of an annulus (ring) with inner radius 1 and outer radius $R_{\rm d}$, where $R_{\rm d}>1$. Here,  $R_{\rm d}$ determines the maximum communication range. Further, we assume that all transmissions are synchronous thanks to a common clock shared by the network. We assume all transmitters have a single antenna while each receiver is  equipped with $N_{\rm r}$ antennas. Our model differs from the ad hoc network models in \cite{Gupta,Leveque,Ozgur,Franceschetti} where source and destination pairs can be arbitrarily chosen. Rather, it is an extension of the bi-polar models used in \cite{Zhang_PC,Baccelli,Weber2,Weber_SIC, Jindal_SIC_JD,Haenggi_STG} by taking the random link distances within the fixed communication range $R_{\rm d}$ into account.

\begin{figure}
\centering
\includegraphics[width=5.5in]{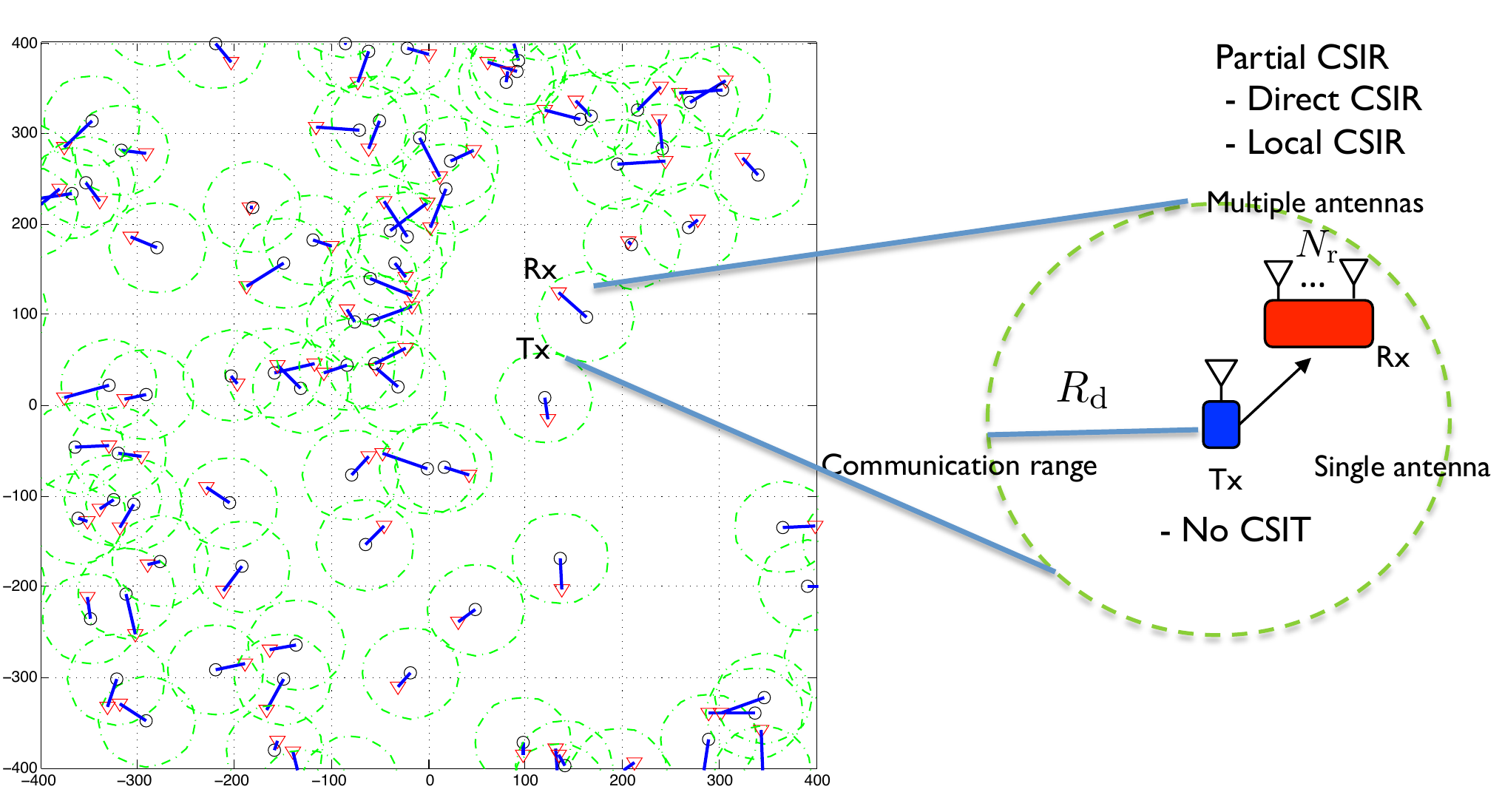} 
\caption{A snapshot of the network with density $\lambda=0.00005$. Each transmitter with a single antenna communciates with its associated receiver with $N_{\rm r}$ multiple receive antennas, which is  uniformly distributed in the area of an annulus with inner radius 1 and outer radius $R_{\rm d}$, where $R_{\rm d}>1$.}
\label{fig:sys}
\end{figure}

\subsection{Signal Model}
In a fixed area $\mathcal{A} \subset \mathbb{R}^2$, a random number $K$ of transmitters communicate by sharing the common spectrum; thereby each transmission interferes with each other. $K$ is a Poisson random variable with mean $\mathbb{E}[K] = \lambda|\mathcal{A}|$ where $\lambda$ can be interpreted as a spatial node intensity per unit area in the network and where $|\mathcal{A}|$ is the area of $\mathcal{A}$. Fig \ref{fig:sys} illustrates a snapshot of the network topology when $\lambda=0.00005$.

In a particular realization of $K$, transmitter $k\in\mathcal{K}$, where $\mathcal{K}=\{1,\ldots,K\}$, sends a message $W_{k}$ to its associated receiver. 
Let $s_{k}=f(W_{k})$ be the signal sent by transmitter $k$ where $f(\cdot)$ represents an encoding function such that the transmitted signal satisfies the power constraint $\mathbb{E}\left[|s_k|^2\right]\leq P$. ${\bf h}_{k,\ell}\in\mathbb{C}^{N_{\rm r}\times 1}$ and ${d}_{k,\ell}=\|{\bf d}_k^{\rm rx}-{\bf d}_{\ell}^{\rm tx}\|_2$ respectively represent the fading vector and the distance from the $k$th transmitter to the $j$th receiver. Further, $\alpha \in\mathbb{R}^{+}$ with $\alpha>2$ and ${\bf z}_k\in\mathbb{C}^{N_{\rm r}\times 1}$ respectively represent the path-loss exponent and the noise vector at the $k$th receiver.  Assuming a narrowband channel model, when all transmitters simultaneously send signals, the total received signal of the $k$th receiver, ${\bf y}_k\in\mathbb{C}^{N_{\rm r}\times 1}$, is given by
\begin{align}
{\bf y}_k \!= {\bf h}_{k,k}{d}_{k,k}^{-\frac{\alpha}{2}}s_k  +  \sum_{\ell \neq k }^K {\bf h}_{k,\ell} {d}_{k,\ell}^{-\frac{\alpha}{2}} s_{\ell} + {\bf z}_k. \label{eq:received_1}
\end{align}
We assume that all entries of ${\bf h}_{k,j}$ are independent and identically distributed (IID) complex Gaussian random variables each with zero mean and unit variance, i.e., $\mathcal{CN}(0,1)$. Furthermore, it is assumed that all entries of ${\bf z}_k$ are IID $\mathcal{CN}(0,\sigma^2)$, where $\sigma^2$ represents the variance of noise. 


\subsection{Sum Spectral Efficiency }
We define two achievable sum spectral efficiencies, each for a different CSIR assumption.

\subsubsection{Direct CSIR}
We first define an instantaneous signal-to-interference-plus-noise-ratio (SINR) when receiver $k\in\mathcal{K}$ exploits ${\bf h}_{k,k}$. This assumption is practically favorable because it requires the receiver to only learn  the direct link's channel, which can be done using a control channel with a reasonable amount of pilot signal overhead. With this CSIR, under the premise of no interference covariance matrix estimation, the optimal receiving strategy is to maximize the desired signal power using a MRC technique \cite{Tse}. Applying the MRC vector $
{\bf w}_{k}^{*}=\frac{{\bf h}_{k,k}^{*}}{ \|{\bf h}_{k,k}\|_2}$, where ${\bf x}^*$ is the complex conjugate of the transpose of vector ${\bf x}$, the instantaneous SINR of receiver $k$ is given by
\begin{align}
\textrm{SINR}^{{\rm mrc}}_k=\frac{H_{k,k}d_{k,k}^{-\alpha}}{ I_k+\frac{1}{{\rm SNR}}}, \label{eq:SINR_mrc}
\end{align}   
where $H_{k,k} =\frac{|{\bf h}_{k,k}^*{\bf h}_{k,k}|^2}{\|{\bf h}_{k,k}\|_2^2}= \|{\bf h}_{k,k}\|_2^2$ denotes the fading power of the direct link, distributed as a Chi-squared random variable with $2N_{\rm r}$ degrees of freedom. Further, ${\rm SNR}=\frac{P}{\sigma^2}$ and $I_k$ denotes the aggregated interference power:
\begin{align}
I_k= \sum_{j \in \mathcal{K}/\{k\}}H_{k,\ell}d_{k,\ell}^{-\alpha},
\end{align}
where $H_{k,\ell} =\frac{|{\bf h}_{k,k}^*{\bf h}_{k,\ell}|^2}{\|{\bf h}_{k,k}\|_2^2}$ represents the fading power of the interfering link from the $\ell$th transmitter to the $k$th receiver, which is an exponential random variable with mean one as shown in \cite{Jindal_PZF,Namyoon_SG}. Under the premise that each transmitter knows the effective SINR and uses adaptive modulation/coding to select the right rate, each link is able to achieve Shannon's bound for its instantaneous SINR, i.e., $\log_2(1+{\rm SINR}_k^{\rm mrc})$. Therefore, the sum of the spectral efficiencies per unit area is given by
\begin{align}
C_{\Sigma}^{{\rm mrc}} &=\frac{1}{\mathcal{A}}\mathbb{E}\left[\sum_{k=1}^K \log_2\left(1\!+\!\textrm{SINR}_k^{\rm mrc} \right)\right]\nonumber\\
&=\lambda\mathbb{E}^{o}\left[ \log_2\left(1\!+\!\textrm{SINR}_k^{\rm mrc} \right)\right], \label{eq:sumrate_mrc}
\end{align}
where $\mathbb{E}^o$ denotes the Palm probability of the PPP and the last equality follows from the definition of Palm probability \cite{Baccelli_book}. The expectations are taken over the multiple level of randomness associated with link distances and fadings. The analysis of this sum spectral efficiency will be presented in Section III.

\subsubsection{Local CSIR}
 We now consider a different assumption where each receiver uses channel knowledge of some limited number of interfering links in addition to that of its own link. Without loss of generality, we can order the interferers in increasing distance from receiver $k$ in such a way that $d_{k,k_1}<d_{k,k_2},\ldots, <d_{k,k_{K-1}}$, where ${d}_{k,k_j}=\|{\bf d}_k^{\rm rx}-{\bf d}_{k_j}^{\rm tx}\|_2$, for $k_j\in\mathcal{K}/\{k\}$. The inequalities are almost surely strict because, with probability 1, no two transmitters are at the same distance from the receiver. With the assumption that receiver $k$ knows CSIR for a certain set of nearest interfering links, we derive an instantaneous SINR expression when each receiver performs ZF-SIC \cite{VBLAST}. The idea of ZF-SIC decoding is to successively cancel the effects of neighbor interference signals before decoding the desired signal; thereby it provides both interference cancellation gain and a power gain in the SINR.  Under the premise that receiver $k$, for $k\in \mathcal{K}$, measures the $L$ nearest interferer channel vectors, i.e., $\{{\bf  h}_{k,k_{\ell}}\}$, for $k_{\ell}\in \mathcal{N}_{k}=\{k_1,\ldots, k_{L}\}$, where $L\leq N_{\rm r}-1$, it is able to construct a concatenated channel matrix ${\bf H}_k=\left[{\bf h}_{k,k},{\bf h}_{k,k_1},\ldots,{\bf h}_{k,k_{L}}\right] \in \mathbb{C}^{N_{\rm r}\times (1+L)}$. Applying the QR decomposition\cite{Caire}, the channel matrix ${\bf H}_k$ is a product of a unitary matrix ${\bf Q}_k \in \mathbb{C}^{N_{\rm r}\times N_{\rm r}} $ and an upper-triangular matrix ${\bf R}_k \in \mathbb{C}^{N_{\rm r}\times (1+L)}$, namely,
\begin{align}
{\bf H}_k={\bf Q}_k{\bf R}_k,
\end{align}
where $[{\bf R}_k]_{i,j}=0$ for $i>j$. Applying ${\bf Q}_k^*$ to the received signal vector in (\ref{eq:received_1}), the resulting input-output relationship is
\begin{align}
{\bf \tilde y}_k={\bf Q}_k^*{\bf y}_k 
&={\bf R}_{k} {\bf s}_k+\sum_{j\in \mathcal{K} \setminus \{\mathcal{N}_k \cup k\}} {\bf \tilde h}_{k,j}d_{k,j}^{-\frac{\alpha}{2}}s_j+{\bf \tilde z}_k,
\end{align}
where ${\bf s}_k=\left[d_{k,k}^{-\frac{\alpha}{2}}s_k,~d_{k,k_1}^{-\frac{\alpha}{2}}s_{k_1},\ldots,~d_{k,k_{L}}^{-\frac{\alpha}{2}}s_{k_{L}}\right]^T$, ${\bf \tilde h}_{k,j}={\bf Q}_k^*{\bf h}_{k,j}$, and ${\bf \tilde z}_{k}={\bf Q}_k^*{\bf z}_{k}$. Since ${\bf Q}_k$ is a unitary matrix and the channel is IID complex Gaussian, the distribution of ${\bf \tilde h}_{k,j}$ (resp. ${\bf \tilde z}_{k}$) is the same as that of ${\bf  h}_{k,j}$ (resp. ${\bf  z}_{k}$). 

Assuming that successive interference cancellation is used, under the premise that each receiver knows the modulation and coding methods of the nearest interfering transmitters, all data streams sent by the $L$ nearest interferers are decoded and can thus be subtracted from the first element of ${\bf \tilde y}_k$, i.e., ${\bf \tilde y}_k(1)$. After subtracting the nearby interferer contributions, we have the following equivalent input-output relationship for decoding the $s_{k}$ data stream:
\begin{align}
{\tilde {y}}_{k}= \tilde{h}_{1,1}d_{k,k}^{-\frac{\alpha}{2}}s_{k} +\sum_{j\in \mathcal{K}\setminus \{\mathcal{N}_k\cup k\}} { \tilde h}_{k,j}d_{k,j}^{-\frac{\alpha}{2}}s_j+{\tilde z}_k, \label{eq:input_output}
\end{align}
where ${\tilde {y}}_{k}={\tilde {\bf y}}_{k}(1)$, ${\tilde h}_{1,1}={\bf R}_k(1,1)$, $\tilde{h}_{k,j}={\bf \tilde h}_{k,j}(1)$, and $\tilde{z}_k={\bf \tilde z}_{k}(1)$. Consequently, the resulting instantaneous SINR of receiver $k$ is given by
\begin{align}
\textrm{SINR}^{{\rm sic}}_k=\frac{\tilde{H}_{k,k}d_{k,k}^{-\alpha}}{ \tilde{I}_k+\frac{1}{{\rm SNR}}}, \label{eq:SINR_sic}
\end{align}   
where $\tilde{H}_{k,j}=|\tilde{h}_{k,j}|^2$ is distributed as a Chi-squared random variable with $2N_{\rm r}$ degrees of freedom \cite{Caire}, ${\rm SNR}=\frac{P}{\sigma^2}$, and $\tilde{I}_k$ denotes the aggregated interference power
\begin{align}
\tilde{I}_k= \!\!\!\!\!\sum_{j \in \mathcal{K}\setminus\{\mathcal{N}_k\cup{k}\}}\!\!\!\!\!\!\tilde{H}_{k,j}d_{k,j}^{-\alpha},
\end{align}
where $\tilde{H}_{k,j}=|\tilde{h}_{k,j}|^2$ is an exponential random variable with mean one as shown in \cite{Caire}.

Consequently, the sum of spectral efficiencies per unit area achieved by the ZF-SIC is given by
\begin{align}
C_{\Sigma}^{{\rm sic}} &=\lambda\mathbb{E}^o\left[ \log_2\left(1+\textrm{SINR}_k^{\rm sic}\right)\right]. \label{eq:sumrate_sic}
\end{align}
The analysis for the sum of spectral efficiencies with this local CSIR will be given in Section V.

It is worthwhile to mention that the sum spectral efficiencies in (\ref{eq:sumrate_mrc}) and (\ref{eq:sumrate_sic}) are the result of averaging over 1) all fading distributions depending on the receiving strategies and 2) all realizations of the network topology under the Poisson assumption. 

In this paper, we use the following asymptotic notation \cite{Knuth}. 1) $f(\lambda)=O(g(\lambda))$ if $f(\lambda)\leq kg(\lambda)$ as $\lambda$ tends to infinity for some constant $k$, 2) $f(\lambda)=\Theta(g(\lambda))$ if $k_1g(\lambda)\leq f(\lambda)\leq k_2g(\lambda)$ as $\lambda$ tends to infinity for some constants $k_1$ and $k_2$, 3) $f(\lambda)=\Omega(g(\lambda))$ if $f(\lambda)\geq kg(\lambda)$ as $\lambda$ tends to infinity for some constant $k$.

 
\section{Direct CSIR} \label{Section3}
In this section, we analyze the ergodic spectral efficiency and the scaling behavior of the network described in Section II when the receiver only exploits CSIR for the direct link. We first provide an exact characterization of the sum spectral efficiency and then derive the scaling law.

\subsection{Analytical Characterization}

The analytical characterization relies on a lemma introduced in \cite{Hamdi}. This Lemma provides in integral expression of the ergodic spectral efficiency as a function of the Laplace transforms of both the desired signal power and the aggregated interference power. For the sake of completeness, we reproduce it below.

\begin{lemma} \label{lem1} Let $X>0$ and $Y>0$ be non-negative and independent random variables. Then, for any $a>0$,
\begin{align}
&\mathbb{E}\left[\ln\left(1+\frac{X}{Y+a}\right)\right] =\int_{0}^{\infty}\frac{e^{-az}}{z}\left(1-\mathbb{E}\left[e^{-zX}\right]\right)\mathbb{E}\left[e^{-zY}\right] \d z.
\end{align}
  \end{lemma}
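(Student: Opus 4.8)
The plan is to reduce the claim to a deterministic integral identity followed by a single exchange-of-order argument. First I would rewrite the logarithm as $\ln\left(1+\frac{X}{Y+a}\right)=\ln(X+Y+a)-\ln(Y+a)$ and invoke the Frullani-type representation
\[
\ln b - \ln c = \int_{0}^{\infty}\frac{e^{-cz}-e^{-bz}}{z}\,\d z, \qquad b,c>0 .
\]
Applied pointwise on the probability space with $c=Y+a$ and $b=X+Y+a$, this gives
\[
\ln\left(1+\frac{X}{Y+a}\right)=\int_{0}^{\infty}\frac{e^{-(Y+a)z}-e^{-(X+Y+a)z}}{z}\,\d z=\int_{0}^{\infty}\frac{e^{-az}}{z}\,e^{-zY}\bigl(1-e^{-zX}\bigr)\,\d z .
\]

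Next I would take expectations on both sides. For every $z>0$ the integrand $\frac{e^{-az}}{z}\,e^{-zY}(1-e^{-zX})$ is non-negative, since $X,Y\ge 0$ and $a>0$, so Tonelli's theorem lets me interchange $\mathbb{E}[\cdot]$ with $\int_{0}^{\infty}\cdot\,\d z$ with no integrability precondition, yielding
\[
\mathbb{E}\left[\ln\left(1+\frac{X}{Y+a}\right)\right]=\int_{0}^{\infty}\frac{e^{-az}}{z}\,\mathbb{E}\bigl[e^{-zY}(1-e^{-zX})\bigr]\,\d z .
\]
Since $X$ and $Y$ are independent, $e^{-zY}$ and $1-e^{-zX}$ are independent for each fixed $z$, so the inner expectation factors as $\mathbb{E}[e^{-zY}]\bigl(1-\mathbb{E}[e^{-zX}]\bigr)$, which is exactly the asserted formula.

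The only step that requires any care is the Frullani representation of $\ln b-\ln c$; I would establish it by writing $\ln b-\ln c=\int_{c}^{b}t^{-1}\,\d t$, substituting $t^{-1}=\int_{0}^{\infty}e^{-tz}\,\d z$, and swapping the two (again non-negative) integrals to obtain $\int_{0}^{\infty}z^{-1}(e^{-cz}-e^{-bz})\,\d z$, or else simply cite the Frullani integral. I do not anticipate any genuine obstacle: the whole content of the lemma is the exponential representation of the logarithm combined with Tonelli and independence, and the positivity of the integrand makes every interchange automatic.
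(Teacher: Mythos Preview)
Your argument is correct and complete: the Frullani representation of $\ln b-\ln c$, the pointwise application with $b=X+Y+a$ and $c=Y+a$, the Tonelli swap (justified by non-negativity of the integrand), and the factoring via independence are all sound. The paper itself does not supply a proof of this lemma; it simply cites Hamdi's paper, so your write-up actually goes beyond what the paper provides by giving a self-contained derivation.
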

\begin{proof} See \cite{Hamdi}.
\end{proof}

Using Lemma \ref{lem1}, we present our main result for the ergodic spectral efficiency in an integral form.

\begin{theorem} \label{Th1} The sum of spectral efficiencies with direct CSIR is 
\begin{align}
C_{\Sigma}^{{\rm mrc}}  
=\frac{\alpha}{2\ln(2)} \int_{1}^{R_{\rm d}} \int_{0}^{\infty}\frac{e^{-\frac{\left( \sinc\left(\frac{2}{\alpha}\right)u    \right)^{\! \frac{\alpha}{2}}  }{(\lambda \pi)^{\frac{\alpha}{2}}{\rm SNR}}-u}}{u}  \frac{\sum_{n=1}^{N_{\rm r}} \binom{N_{\rm{r}}}{n} \left(\frac{\sinc\left(\frac{2}{\alpha}\right)u }{\lambda \pi r^{2}}\right)^{\!\!\! n\frac{\alpha}{2}}   }{\left(1+\left(\frac{\sinc\left(\frac{2}{\alpha}\right)u }{\lambda \pi r^{2}}\right)^{\!\!\! \frac{\alpha}{2}}  \right)^{N_{\rm r}}}  \d u \frac{2r}{R_{\rm d}^2-1}\d r. \label{eq:Th1}
\end{align} 
\end{theorem}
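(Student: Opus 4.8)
The plan is to evaluate the Palm expectation in \eqref{eq:sumrate_mrc} by conditioning on the direct-link distance, invoking Lemma~\ref{lem1}, and computing the two resulting Laplace transforms in closed form. First I would condition on $d_{k,k}=r$: by the bi-polar model the typical receiver is placed uniformly in the annulus $\{x:1\le\|x\|_2\le R_{\rm d}\}$ about its transmitter, and this mark is independent of the rest of the process, so $d_{k,k}$ has density $\frac{2r}{R_{\rm d}^2-1}$ on $[1,R_{\rm d}]$ and conditioning on it does not alter the interferer point process. With $d_{k,k}=r$ fixed, write $\textrm{SINR}_k^{\rm mrc}=\frac{X}{Y+a}$ where $X:=H_{k,k}r^{-\alpha}$, $Y:=I_k$, and $a:=1/\textrm{SNR}$. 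Since $X$ is a function of the direct-link fading vector $\mathbf{h}_{k,k}$ only, while $Y$ is a function of the interferer locations and the projected interference powers $\{H_{k,\ell}\}$, and $\mathbf{h}_{k,k}$ is independent of both, $X$ and $Y$ are conditionally independent; Lemma~\ref{lem1} with $a=1/\textrm{SNR}$ then gives
\[
\mathbb{E}^{o}\bigl[\ln(1+\textrm{SINR}_k^{\rm mrc})\mid d_{k,k}=r\bigr]=\int_{0}^{\infty}\frac{e^{-z/\textrm{SNR}}}{z}\Bigl(1-\mathbb{E}[e^{-zX}]\Bigr)\mathbb{E}[e^{-zY}]\,\d z .
\]

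Next I would compute the transforms. Since $H_{k,k}$ is $\chi^2$-distributed with $2N_{\rm r}$ degrees of freedom, i.e.\ a sum of $N_{\rm r}$ i.i.d.\ unit-mean exponentials, $\mathbb{E}[e^{-zX}]=(1+zr^{-\alpha})^{-N_{\rm r}}$, so the binomial theorem gives
\[
1-\mathbb{E}[e^{-zX}]=\frac{(1+zr^{-\alpha})^{N_{\rm r}}-1}{(1+zr^{-\alpha})^{N_{\rm r}}}=\frac{\sum_{n=1}^{N_{\rm r}}\binom{N_{\rm r}}{n}(zr^{-\alpha})^{n}}{(1+zr^{-\alpha})^{N_{\rm r}}} .
\]
For $Y=I_k$, by Slivnyak's theorem the interfering transmitters still form a homogeneous PPP of intensity $\lambda$, and since the $H_{k,\ell}$ are i.i.d.\ unit-mean exponential and independent of this process (and since by stationarity the distribution of $I_k$ does not depend on where the typical receiver sits), the probability generating functional of the PPP gives $\mathbb{E}[e^{-zY}]=\exp\!\left(-\lambda\int_{\mathbb{R}^2}\frac{z}{\|x\|_2^{\alpha}+z}\,\d x\right)$. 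Passing to polar coordinates and substituting $v=\|x\|_2^{\alpha}/z$ reduces this planar integral to $\frac{2\pi}{\alpha}z^{2/\alpha}\int_{0}^{\infty}\frac{v^{2/\alpha-1}}{1+v}\,\d v$, and since $\alpha>2$ the Euler reflection formula gives $\int_{0}^{\infty}\frac{v^{2/\alpha-1}}{1+v}\,\d v=\frac{\pi}{\sin(2\pi/\alpha)}$; with $\sinc(2/\alpha)=\frac{\sin(2\pi/\alpha)}{2\pi/\alpha}$ this collapses to $\mathbb{E}[e^{-zY}]=\exp\!\left(-\frac{\lambda\pi z^{2/\alpha}}{\sinc(2/\alpha)}\right)$, which is independent of $r$.

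Finally I would substitute $u=\frac{\lambda\pi z^{2/\alpha}}{\sinc(2/\alpha)}$, i.e.\ $z=\left(\frac{\sinc(2/\alpha)\,u}{\lambda\pi}\right)^{\alpha/2}$, under which $\frac{\d z}{z}=\frac{\alpha}{2}\frac{\d u}{u}$, $zr^{-\alpha}=\left(\frac{\sinc(2/\alpha)\,u}{\lambda\pi r^{2}}\right)^{\alpha/2}$, $\frac{z}{\textrm{SNR}}=\frac{(\sinc(2/\alpha)u)^{\alpha/2}}{(\lambda\pi)^{\alpha/2}\textrm{SNR}}$, and $\exp\!\left(-\frac{\lambda\pi z^{2/\alpha}}{\sinc(2/\alpha)}\right)=e^{-u}$; plugging these into the previous display turns the $z$-integral into $\frac{\alpha}{2}$ times exactly the inner $u$-integrand of \eqref{eq:Th1}. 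Integrating over $r\in[1,R_{\rm d}]$ against $\frac{2r}{R_{\rm d}^2-1}$, multiplying by $\lambda$, and converting $\ln$ to $\log_2$ with the factor $1/\ln 2$ as in \eqref{eq:sumrate_mrc} then yields \eqref{eq:Th1} after collecting the constants.

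I do not anticipate a real obstacle --- the calculation is a standard stochastic-geometry exercise once it is set up --- but two points deserve care. The first is the conditional independence of $X$ and $Y$ given $d_{k,k}$ that licenses Lemma~\ref{lem1}; this rests on the fact, built into the signal model, that after projection onto $\mathbf{w}_k=\mathbf{h}_{k,k}^{*}/\|\mathbf{h}_{k,k}\|_2$ the interference fading powers $H_{k,\ell}$ remain i.i.d.\ $\mathrm{Exp}(1)$ and independent of $\mathbf{h}_{k,k}$, so that the probability generating functional step is valid. The second is simply bookkeeping: keeping the planar interference integral and the ensuing substitution under control so that every constant lands precisely as written in \eqref{eq:Th1}.
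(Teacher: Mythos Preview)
Your proposal is correct and follows essentially the same route as the paper: condition on the direct-link distance, apply Lemma~\ref{lem1}, use the chi-squared Laplace transform for the signal term and the PPP probability generating functional for the interference term, then change variables to produce the $u$-integral and average over the annulus law of $d_{k,k}$. The only cosmetic difference is that the paper groups the SINR as $H_{k,k}/(d^{\alpha}I_k+d^{\alpha}/\textrm{SNR})$ (so the $d$-dependence sits on $Y$ and $a$, and the substitution is $u=\frac{\lambda\pi d^{2}}{\sinc(2/\alpha)}z^{2/\alpha}$), whereas you put the $r^{-\alpha}$ on $X$; both lead to the identical final integrand, and your closing remark about the independence of $H_{k,k}$ from the projected interference powers is exactly the point that makes Lemma~\ref{lem1} applicable.
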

\begin{proof}
See Appendix \ref{proof:Th1}.
\end{proof}

\begin{figure}
\centering
\includegraphics[width=4.5in]{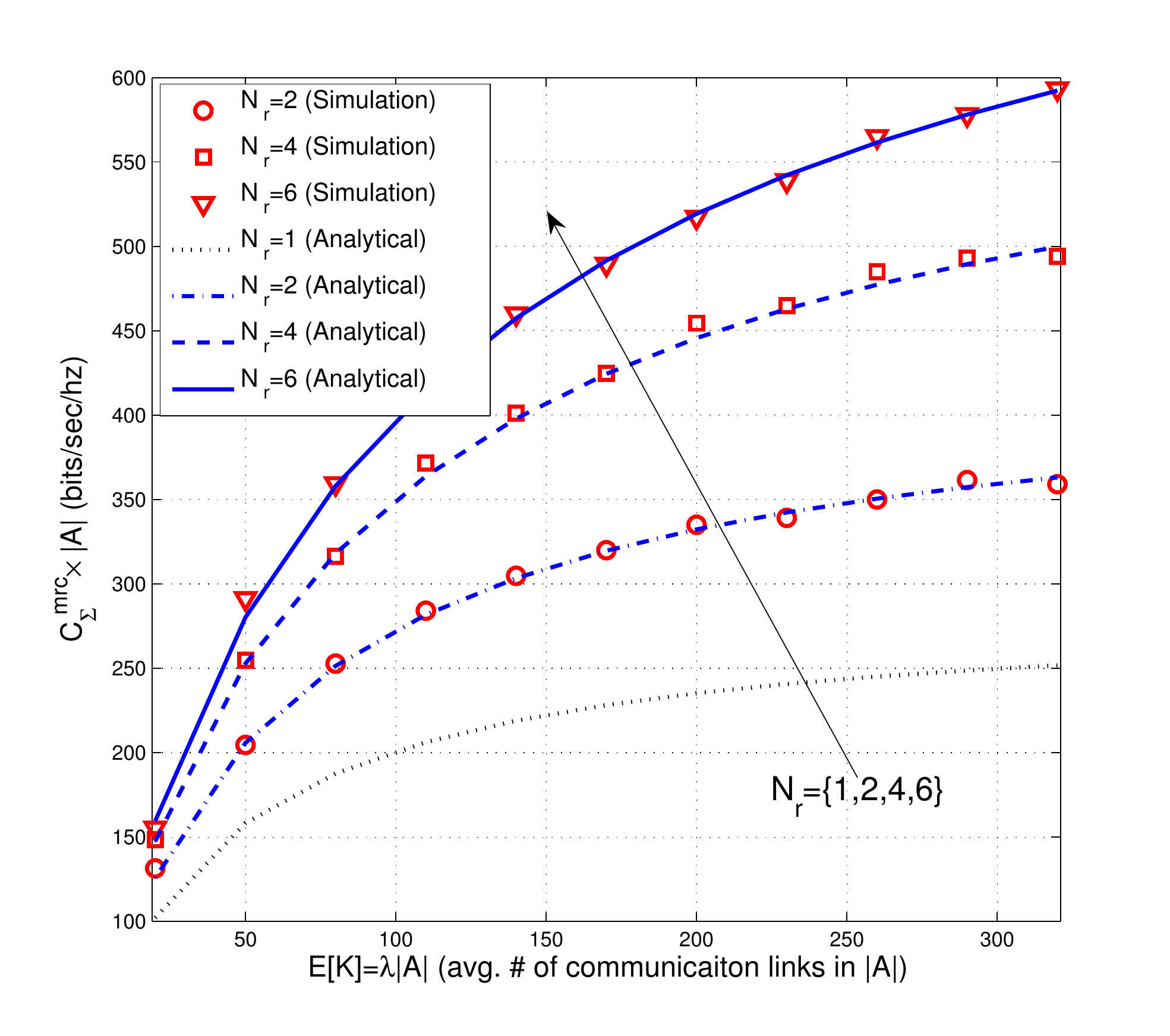} 
\caption{The sum of spectral efficiencies achieved with direct CSIR when $|\mathcal{A}|=\pi 500^2$ ($m^2$), $\alpha=4$, $R_{\rm d}=50$m, $P=20$ (dBm), and $\sigma^2=-104$ (dBm).}
\label{fig:1}
\end{figure}

The sum spectral efficiency depends on the relevant system parameters, chiefly the density of links $\lambda$, the number of antennas at the receiver $N_{\rm r}$, the path-loss exponent $\alpha$, the communication range $R_{\rm d}$, and the operating SNR. 
This formula generalizes the expression given in \cite{Hunter_MIMO} in the sense that it incorporates both the randomness on the direct link's distance and noise effects, avoiding sophisticated differentiations of the Laplace transform of the interference power. To back Theorem \ref{Th1}, we compare (\ref{eq:Th1}) with simulation results in Fig. \ref{fig:1}, when $\alpha=4$, ${\rm SNR}=84$ (dB), and $R_{\rm d}= 50$ m. The agreement is excellent for the various values of $N_{\rm r}$ considered and for the entire range of $\lambda$ of interest. As $N_{\rm r}$ increases, the sum of the spectral efficiencies improves because the desired signal power is boosted by the array gain. In particular, the gain is significant in the regime of low $\lambda$ and  saturates as the density increases. This fact reveals that MRC is simple yet effective to improve the sum spectral efficiency when the density of link is low. This is because, in the low density regime, boosting the signal power while treating interference as noise is asymptotically optimal \cite{TIN_paper}.

To provide a more transparent interpretation of the expression in Theorem \ref{Th1}, we consider the following examples.

{\bf Example 1:} The simplest scenario is that where $N_{\rm r}=1$, $\alpha=4$, and $d_{k,k}=d$. In this case, in the interference limited regime, Theorem \ref{Th1} gives 
\begin{align}
C_{\Sigma}^{{\rm mrc}}=\frac{2\lambda}{\ln(2)} \left\{\sin\left(\frac{\lambda \pi^2  d^2}{2}\right)\left(\frac{\pi}{2}-{\rm Si}\left(\frac{\lambda \pi^2  d^2}{2}\right) \right)-\cos\left(\frac{\lambda \pi^2  d^2}{2}\right){\rm Ci}\left(\frac{\lambda \pi^2  d^2}{2}\right)\right\}, \label{eq:single_antenna_mrc}
\end{align} 
where ${\rm Si}(z)=\int_{0}^{z}\frac{\sin(t)}{t}\d t$ and ${\rm Ci}(z)=-\int_{z}^{\infty}\frac{\cos(t)}{t}\d t$ respectively represent  the cosine and the sine integral function. This analytical expression is useful to understand the interplay between the link distance and the density in the capacity scaling law. For example, if we shrink the link distance $d=\frac{1}{\sqrt{\lambda \pi}}$, the sum spectral efficiency boils down to
\begin{align}
C_{\Sigma}^{{\rm mrc}}=\frac{2\lambda}{\ln(2)}\left(\frac{\pi}{2}-{\rm Si}\left(\frac{  \pi}{2}\right)\right)\simeq 0.5772 \lambda.
\end{align} 
This example shows that it is possible to obtain a linear growth of the capacity with the density, i.e., $\Theta(\lambda)$, provided the link distance scales down as $d=\Theta\left(\frac{1}{\sqrt{\lambda}}\right)$ when $N_{\rm r}=1$.

{\bf Example 2:} For the given link distance $d_{k,k}=d$, in interference limited networks, Theorem \ref{Th1} simplifies further to a single integral form as 
\begin{align}
C_{\Sigma}^{{\rm mrc}}=\frac{\lambda\alpha}{2\ln(2)}\sum_{n=1}^{N_{\rm r}} \binom{N_{\rm{r}}}{n} \int_{0}^{\infty}\frac{e^{-u}}{u}  \frac{ \left(\frac{\sinc\left(\frac{2}{\alpha}\right)u }{\lambda \pi d^{2}}\right)^{\!\!\! n\frac{\alpha}{2}}   }{\left(1+\left(\frac{\sinc\left(\frac{2}{\alpha}\right)u }{\lambda \pi d^{2}}\right)^{\!\!\! \frac{\alpha}{2}}  \right)^{N_{\rm r}}}  \d u.
\end{align} 
This expression provides a better understanding of the sum spectral efficiency performance than the expression given in Theorem \ref{Th1}. For instance, it is possible to observe that the sum spectral efficiency per link increases with the number of receive antennas $N_{\rm r}$.

{\bf Example 3:} For a given set of network parameters, the MRC technique that treats all interference as noise could be an optimal receiving strategy for a certain fraction of all communication links. Since all communication links experience the same SNR and interference-to-noise ratio (INR) distributions under the PPP to compute its fraction, we compute the Palm probability that a typical receiver satisfies the condition for being scheduled by the ITLinQ scheduling algorithm in \cite{ITLinQ}. In particular, to make this computation simple yet capturing the interplay between the density and the number of receive antennas, we use channel hardening assumptions \cite{Hochwald}, which essentially hold when a large number of receive antennas is used.

Suppose that, when using MRC, the $k$th link obtains a deterministic array gain $N_{\rm r}$, and that the fading power from the interferer is a constant and equals 1, for $\ell\neq k$. Conditioned on $d_{k,k}=d$, the probability that the typical receiver satisfies the ITLinQ (destination) condition in \cite{ITLinQ} is,
\begin{align}
\mathbb{P}\left[\sqrt{{\rm SNR}_{k,k}}\geq {\rm INR}_{k,k_1}\right]
&=
\mathbb{P}\left[\sqrt{\frac{N_{\rm r}P}{d^{\alpha}\sigma^2}}\geq  \frac{P}{d_{k,k_1}^{\alpha}\sigma^2}\right]\nonumber \\
&=\mathbb{P}\left[d_{k,k_1}\geq  \left(\frac{P}{\sigma^2 N_{\rm r}}\right)^{\!\!\frac{1}{2\alpha}}  d^{\frac{1}{2}}\right]
\nonumber \\
&\stackrel{(a)}{=} \mathbb{P}\left[ \Phi\left(\mathcal{B}\left(0, \left(\frac{P}{\sigma^2 N_{\rm r}}\right)^{\!\!\frac{1}{2\alpha}}  d^{\frac{1}{2}}\right)=0\right) \right]  
\nonumber \\
&= \exp\left(-\lambda \pi  \left(\frac{P}{\sigma^2 N_{\rm r}}\right)^{\!\!\frac{1}{\alpha}}  d  \right),
\end{align} 
where (a) comes from the fact that the probability that the distance to the nearest interferer is greater than $x>0$ equals the probability that there is no interferer in the ball with radius $x$. This expression shows the benefits of MRC. For the given density, the probability that the optimality condition of treating interference noise is satisfied increases as the number of antennas increases. In an asymptotic sense, if we scale up the number of antennas with the density in such a way that $\lim_{\lambda\rightarrow \infty}\frac{\lambda}{N_{\rm r}^{\frac{1}{\alpha}}}=0$, then applying MRC while treating all interference signals as noise is the optimal strategy with high probability. 


\subsection{Scaling Law}
Although general and exact, the expression given in Theorem \ref{Th1} is rather complicated, propelling the interest in more compact characterizations. Still in full generality, in this section, we provide a lower and upper bound with a closed-form to the sum spectral efficiency, which allows to prove the scaling law. The lower and upper bounds are derived using both Lemma \ref{lem1} and the following lemma, which uses the first order moment measure of the signal power and the interference power to establish the bounds. 

\begin{lemma} \label{lem2} Let $X>0$ and $Y>0$ be independent non-negative random variables such that $\mathbb{E}[X] <\infty$, $\mathbb{E}[\ln(X)] <\infty$, $\mathbb{E}\left[\frac{1}{Y}\right]<\infty$, and $\mathbb{E}[Y] <\infty$. Then, 
\begin{align}
\log_2\left(1+\frac{\exp\left(\mathbb{E}[\ln(X)]\right)}{\mathbb{E}\left[Y\right]}\right)\leq \mathbb{E}_{X,Y}\left[\log_2\left(1+\frac{X}{Y}\right)\right]\leq 
\log_2\left(1+ \mathbb{E}[X]\mathbb{E}\left[\frac{1}{Y}\right]\right).
\end{align}
\end{lemma}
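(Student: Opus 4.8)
\textbf{Proof proposal for Lemma \ref{lem2}.} The plan is to treat the two inequalities separately: the upper bound is an immediate consequence of concavity together with independence, while the lower bound rests on a joint convexity property obtained after a multiplicative reparametrization of the signal variable.

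For the upper bound I would first invoke the concavity of $t\mapsto\log_2(1+t)$ on $[0,\infty)$: Jensen's inequality gives $\mathbb{E}_{X,Y}\!\left[\log_2(1+X/Y)\right]\le\log_2\!\left(1+\mathbb{E}[X/Y]\right)$, where $\mathbb{E}[X/Y]<\infty$ because $X$ and $Y$ are independent with $\mathbb{E}[X]<\infty$ and $\mathbb{E}[1/Y]<\infty$. Then, again by independence, $\mathbb{E}[X/Y]=\mathbb{E}[X]\,\mathbb{E}[1/Y]$, which is exactly the claimed right-hand side. This part is routine.

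For the lower bound the key step is to write $X=e^{U}$ with $U=\ln X$, so that $\log_2(1+X/Y)=g(U,Y)$ with $g(u,y)=\frac{1}{\ln 2}\ln\!\big(1+e^{u}/y\big)$ on $\mathbb{R}\times(0,\infty)$, and to show that $g$ is jointly convex. I would verify this through the Hessian of $h(u,y)=\ln(y+e^{u})-\ln y$: setting $s=y+e^{u}$, one computes $h_{uu}=e^{u}y/s^{2}$, $h_{uy}=-e^{u}/s^{2}$ and $h_{yy}=1/y^{2}-1/s^{2}$; both diagonal entries are nonnegative (the second because $s\ge y$), and a short simplification gives $\det\nabla^{2}h=e^{2u}/(s^{3}y)\ge 0$, hence $\nabla^{2}h\succeq 0$ and $g$ is convex. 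Applying the multivariate Jensen inequality to the random vector $(U,Y)=(\ln X,Y)$, whose mean is finite by the hypotheses $\mathbb{E}[\ln X]<\infty$ and $\mathbb{E}[Y]<\infty$, yields $\mathbb{E}_{X,Y}\!\left[\log_2(1+X/Y)\right]=\mathbb{E}\!\left[g(U,Y)\right]\ge g\!\left(\mathbb{E}[\ln X],\mathbb{E}[Y]\right)=\log_2\!\big(1+e^{\mathbb{E}[\ln X]}/\mathbb{E}[Y]\big)$, which is the left-hand side. Note that this direction does not use independence.

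The main obstacle is the joint convexity of $g$. Naively splitting $\ln(1+e^{u}/y)=\ln(y+e^{u})-\ln y$ is useless, since $\ln(y+e^{u})$ is jointly \emph{concave} while $-\ln y$ is convex, so one really has to go through the Hessian computation (equivalently, one can recognize $\ln(e^{u}+e^{v})$ as the jointly convex log-sum-exp function and argue with some care about the substitution $v=\ln y$). Beyond that, the only remaining care is to confirm that every expectation appearing is finite so that both Jensen applications are legitimate: the bound $\ln(1+X/Y)\le X/Y$ together with $\mathbb{E}[X]\,\mathbb{E}[1/Y]<\infty$ controls the integrand, and the moment hypotheses on $X$, $\ln X$, $Y$ and $1/Y$ cover the quantities on the two sides.
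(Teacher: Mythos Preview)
Your upper bound is exactly the paper's argument: concavity of $t\mapsto\log_2(1+t)$ followed by $\mathbb{E}[X/Y]=\mathbb{E}[X]\,\mathbb{E}[1/Y]$ from independence.

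For the lower bound your route differs from the paper's. The paper applies Jensen twice in sequence: first it uses that $y\mapsto\log_2(1+a/y)$ is convex to replace $Y$ by $\mathbb{E}[Y]$ (this step relies on independence, since one averages over $Y$ while holding $X$ fixed), and then it uses that $u\mapsto\log_2(1+be^{u})$ is convex to replace $\ln X$ by $\mathbb{E}[\ln X]$. You instead establish \emph{joint} convexity of $(u,y)\mapsto\log_2(1+e^{u}/y)$ via the Hessian and apply multivariate Jensen once. Your Hessian computation is correct, and the resulting argument is slightly stronger: as you observe, it does not need $X$ and $Y$ to be independent, whereas the paper's two-step version does. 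The price is the extra calculus; the paper's sequential approach is quicker because each one-variable convexity claim is immediate.
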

\begin{proof}
See Appendix \ref{proof:lem2}.
\end{proof}

Leveraging Lemma \ref{lem2}, we provide the linear scaling law in networks with direct CSIR in the following theorem. 
 
 \begin{theorem}[Scaling law with direct CSIR] \label{Th2} Assume that $N_{\rm r}=c\lambda^{\beta}$ for some $c>0$ and $\beta\geq 0$. Then, in the interference limited regime ($\sigma^2=0$), the ergodic spectral efficiency of a typical link scales with the density as follows:
\begin{align}
\frac{C_{\Sigma}^{{\rm mrc}}}{\lambda} =\Theta\left(  \log_2\left(1+\lambda^{\beta-\frac{\alpha}{2}}\right)\right),
\end{align}
as $\lambda \rightarrow \infty$.
\end{theorem}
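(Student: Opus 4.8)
## Proof Proposal for Theorem 2

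The plan is to sandwich $C_{\Sigma}^{\rm mrc}/\lambda = \mathbb{E}^o[\log_2(1+\textrm{SINR}_k^{\rm mrc})]$ between two closed-form bounds using Lemma \ref{lem2}, and then show both bounds are $\Theta(\log_2(1+\lambda^{\beta-\alpha/2}))$. In the interference-limited regime we have $\textrm{SINR}_k^{\rm mrc} = H_{k,k}d_{k,k}^{-\alpha}/I_k$, so I set $X = H_{k,k}d_{k,k}^{-\alpha}$ and $Y = I_k$. Here $X$ depends on $\|{\bf h}_{k,k}\|_2^2$ (Chi-squared with $2N_{\rm r}$ d.o.f.) and on the direct-link distance $d_{k,k}$, which is uniform on the annulus, while $Y$ is the shot-noise interference generated by the PPP with unit-mean exponential marks. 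Crucially, $X$ and $Y$ are independent (the MRC projection decorrelates the direct fading from the interferer fadings, as noted in the paper), so Lemma \ref{lem2} applies directly.

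First I would compute the three moments required by Lemma \ref{lem2}. For the upper bound I need $\mathbb{E}[X] = \mathbb{E}[H_{k,k}]\,\mathbb{E}[d_{k,k}^{-\alpha}] = N_{\rm r}\cdot c_1$ for a constant $c_1 = \frac{2}{R_{\rm d}^2-1}\int_1^{R_{\rm d}} r^{1-\alpha}\,\d r$ depending only on $\alpha,R_{\rm d}$, and $\mathbb{E}[1/Y] = \mathbb{E}[1/I_k]$. For the lower bound I need $\exp(\mathbb{E}[\ln X]) = \exp(\mathbb{E}[\ln H_{k,k}] + \mathbb{E}[\ln d_{k,k}^{-\alpha}])$; since $\mathbb{E}[\ln H_{k,k}] = \psi(N_{\rm r}) + \ln 1 = \psi(N_{\rm r})$ where $\psi$ is the digamma function, and $\psi(N_{\rm r}) = \ln N_{\rm r} + O(1/N_{\rm r})$, this contributes a factor $\Theta(N_{\rm r})$; the $\mathbb{E}[\ln d_{k,k}^{-\alpha}]$ term is a finite constant. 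The key observation is that $\mathbb{E}[1/I_k]$ and $\mathbb{E}[Y]=\mathbb{E}[I_k]$ are both finite and scale as $\Theta(\lambda^{-\alpha/2})$ and $\Theta(\lambda^{\alpha/2})$ respectively: by the standard scaling property of PPP shot noise with path-loss exponent $\alpha$ over $\mathbb{R}^2$ (Campbell's theorem for the mean, and a stable-distribution / Laplace-transform argument for the reciprocal moment), $I_k \stackrel{d}{=} \lambda^{\alpha/2} \tilde I$ for a density-independent random variable $\tilde I$ with $\mathbb{E}[\tilde I^{-1}] < \infty$ and $\mathbb{E}[\tilde I] < \infty$ — though I should be careful, since $\mathbb{E}[I_k]$ over the whole plane may diverge, and I may instead need to bound $\mathbb{E}[I_k]$ restricted to interferers outside radius $1$, or work with a different lower-bounding functional; checking the exact finiteness of these moments is the first place the argument could need care.

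Granting the moment scalings, the bounds become $\log_2(1 + c_L N_{\rm r}\lambda^{-\alpha/2}) \le C_{\Sigma}^{\rm mrc}/\lambda \le \log_2(1 + c_U N_{\rm r}\lambda^{-\alpha/2})$ for constants $0 < c_L \le c_U$ independent of $\lambda$. Substituting $N_{\rm r} = c\lambda^\beta$ gives $\log_2(1 + c\,c_L\,\lambda^{\beta-\alpha/2}) \le C_{\Sigma}^{\rm mrc}/\lambda \le \log_2(1 + c\,c_U\,\lambda^{\beta-\alpha/2})$. The final step is to verify both sides are $\Theta(\log_2(1+\lambda^{\beta-\alpha/2}))$: for $\beta > \alpha/2$ both behave like $(\beta-\alpha/2)\log_2\lambda$ up to additive constants, hence up to multiplicative constants; for $\beta < \alpha/2$ both tend to $0$ like a constant times $\lambda^{\beta-\alpha/2}$; for $\beta = \alpha/2$ both are positive constants. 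In every regime the ratio of the two bounds to the claimed rate is bounded above and below by positive constants, which is exactly the $\Theta$ claim. The main obstacle I anticipate is rigorously pinning down the reciprocal-moment scaling $\mathbb{E}[1/I_k] = \Theta(\lambda^{-\alpha/2})$ — in particular the lower bound $\mathbb{E}[1/I_k] \ge c\lambda^{-\alpha/2}$, which requires an upper tail estimate on $I_k$ (e.g. via its Laplace transform or a truncation of the interference field) rather than just a mean computation; everything else is bookkeeping with digamma asymptotics and the logarithm.
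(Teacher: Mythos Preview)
Your upper bound is essentially the paper's argument: apply Lemma~\ref{lem2} with $X=H_{k,k}d_{k,k}^{-\alpha}$ and $Y=I_k$, use $\mathbb{E}[H_{k,k}]=N_{\rm r}$, and compute $\mathbb{E}[1/I_k]$ exactly via $\mathbb{E}[1/I_k]=\int_0^\infty \mathcal{L}_{I_k}(s)\,\d s$, which gives a closed form $\Gamma(1+\tfrac{\alpha}{2})\sinc(\tfrac{2}{\alpha})^{\alpha/2}(\lambda\pi)^{-\alpha/2}$. So the scaling $\mathbb{E}[1/I_k]=\Theta(\lambda^{-\alpha/2})$ is not an obstacle at all---it is an identity, and no tail estimate is needed.

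The genuine gap is on the lower-bound side. Under the singular path-loss $\|x\|^{-\alpha}$ used here, $\mathbb{E}[I_k]=\infty$: by Campbell's theorem, $\mathbb{E}[I_k]=2\pi\lambda\int_0^\infty r^{1-\alpha}\,\d r$, which diverges at the origin for $\alpha>2$. Hence the lower bound $\log_2(1+e^{\mathbb{E}[\ln X]}/\mathbb{E}[I_k])$ from Lemma~\ref{lem2} collapses to $0$ and gives nothing. You correctly flag this concern, but the proposed patch---restricting $I_k$ to interferers outside radius~$1$---goes the wrong way: truncating the interference makes $I_k$ \emph{smaller}, hence the SINR larger, which produces an upper bound rather than a lower bound. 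Your stable-law scaling $I_k\stackrel{d}{=}\lambda^{\alpha/2}\tilde I$ is correct, but $\mathbb{E}[\tilde I]=\infty$ too, so the difficulty is intrinsic and not an artifact of the parametrization.

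The paper's resolution is to abandon the $\mathbb{E}[I_k]$ route entirely for the lower bound. It applies only the first Jensen step of Lemma~\ref{lem2} (in $H_{k,k}$, yielding the factor $e^{\mathbb{E}[\ln H_{k,k}]}\ge N_{\rm r}-1$) but keeps the expectation over $d_{k,k}$ and $I_k$ intact. It then invokes Lemma~\ref{lem1} to write $\mathbb{E}[\log_2(1+(N_{\rm r}-1)/(d_{k,k}^\alpha I_k))]$ as an integral against the Laplace transform $\mathcal{L}_{I_k}(z)=\exp(-\tfrac{\lambda\pi}{\sinc(2/\alpha)}z^{2/\alpha})$, which is perfectly finite even though the mean is not. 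A further Jensen step on $d_{k,k}^2$ inside the exponential, followed by the elementary inequality $e^{-u}\ge\tfrac{2}{\alpha}e^{-u^{\alpha/2}}$ for $u\ge0$ and $\alpha>2$, reduces the integral to a closed form of the shape $\tfrac{2}{\alpha}\log_2(1+{\rm const}\cdot(N_{\rm r}-1)\lambda^{-\alpha/2})$. The moral: for heavy-tailed shot noise you must work through the Laplace transform (Lemma~\ref{lem1}) rather than the first moment.
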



 \begin{proof}
 We begin with establishing a lower bound on $C_{\Sigma}^{\rm mrc}$ to prove that $\frac{C_{\Sigma}^{{\rm mrc}}}{\lambda}=\Omega\left(  \log_2\left(1+\lambda^{\beta-\frac{\alpha}{2}}\right)\right)$. From Lemma \ref{lem2}, in the interference limited regime, the sum spectral efficiency is lower bounded as follows:
\begin{align}
&\lambda\mathbb{E}_{H_{k,k},d_{k,k},I_k}\left[\log_2\!\left(1+\frac{H_{k,k}d_{k,k}^{-\alpha}}{I_k +\frac{1}{\rm SNR}}\right) \right] \geq\lambda \mathbb{E}_{d_{k,k},I_k}\left[\log_2\left(1\!+\!\frac{ e^{\mathbb{E}\left[\ln(H_{k,k})\right] } }{   d^{\alpha}_{k,k}I_k  }\right)\right].\label{eq:mrc_lower_2}
\end{align}
Using the fact that $H_{k,k}$ is a Chi-squared random variable with $2N_{\rm r}$ degrees of freedom, we obtain
\begin{align}
\mathbb{E}\left[\ln(H_{k,k})\right]= \ln(2)+\psi\left(\frac{N_{\rm r}}{2}\right) ,\label{eq:mean_sig_mrc}\end{align}
where 
\begin{align}
\psi(n)=-\gamma +\sum_{q=1}^{n-1}\frac{1}{q},
\end{align}
with $\gamma=\int_{0}^{\infty}\ln(x)e^{-x}\d x=0.57721566$ the Euler-Mascheroni's constant.
Using the inequality $e^{\ln(2)+\psi\left(\frac{N_{\rm r}}{2}\right)}\geq N_{\rm r}-1$ for all $N_{\rm r}> 1$, we obtain
\begin{align} 
e^{\mathbb{E}\left[\ln(H_{k,k})\right]} \geq N_{\rm r}-1.
\end{align}
We now use Lemma \ref{lem1} to obtain a lower bound. With Lemma \ref{lem1}, the lower bound in (\ref{eq:mrc_lower_2}) can be rewritten in an integral form as
\begin{align}
\lambda \mathbb{E}_{d_{k,k},I_k}\left[\log_2\left(1\!+\!\frac{ N_{\rm r}-1}{ d_{k,k}^{\alpha} I_k  }\right)\right]&=\frac{\lambda}{\ln(2)}\int_{0}^{\infty}\frac{1}{z}\left(1-e^{-z (N_{\rm r}-1)}\right)\mathbb{E}_{d_{k,k},I_{k,k}}\left[e^{-zd_{k,k}^{\alpha} I_k}\right]\d z \nonumber \\
&\stackrel{(a)}{=}\frac{\lambda}{\ln(2)}\int_{0}^{\infty}\frac{1}{z}\left(1-e^{-z (N_{\rm r}-1)}\right)\mathbb{E}_{d_{k,k}}\left[e^{-\frac{\lambda\pi d_{k,k}^2}{\sinc\left(\frac{2}{\alpha}\right)}z^{\frac{2}{\alpha}}}\right]\d z \nonumber \\
&\stackrel{(b)}{\geq}\frac{\lambda}{\ln(2)}\int_{0}^{\infty}\frac{1}{z}\left(1-e^{-z (N_{\rm r}-1)}\right)e^{-\frac{\lambda\pi \mathbb{E}\left[d_{k,k}^2\right]}{\sinc\left(\frac{2}{\alpha}\right)}z^{\frac{2}{\alpha}}}\d z \nonumber \\
&\stackrel{(c)}{=}\frac{\alpha\lambda}{2\ln(2)}\int_{0}^{\infty}\frac{e^{-u}}{u}\left\{1-e^{-  \left(2\sinc\left(\frac{2}{\alpha}\right)\right)^{ \frac{\alpha}{2}} \frac{ N_{\rm r}-1 }{ \left(\lambda \pi \left(R_{\rm d}^2+1\right) \right)^{\frac{\alpha}{2}}}  u^{\frac{\alpha}{2}}  }  \right\}\d u
\nonumber \\
&\stackrel{(d)}{\geq}\frac{\lambda}{\ln(2)}\int_{0}^{\infty}\frac{e^{-u^{\frac{\alpha}{2}}  }}{u}\left\{1-e^{-  \left(2\sinc\left(\frac{2}{\alpha}\right)\right)^{ \frac{\alpha}{2}} \frac{ N_{\rm r}-1 }{ \left(\lambda \pi \left(R_{\rm d}^2+1\right)\right)^{\frac{\alpha}{2}}}  u^{\frac{\alpha}{2}} }  \right\}\d u, \nonumber \\
&= \lambda\frac{2}{\alpha} \log_2\left(1+ \left(2\sinc\left(\frac{2}{\alpha}\right)\right)^{ \frac{\alpha}{2}} \frac{ N_{\rm r}-1 }{ \left(\lambda \pi \left(R_{\rm d}^2+1\right)\right)^{\frac{\alpha}{2}}} \right), \label{eq:lower_integ_1}
\end{align}
where (a) follows from the expression for the Laplace functional of the PPP, (b) follows from Jensen's inequality and $\mathbb{E}\left[d_{k,k}^{2}\right]=\frac{R_{\rm d}^2+1}{2}$, (c) comes from the variable change in (\ref{eq:condi_ergodic_rate_2}), and (d) follows from the fact that $e^{-u}\geq \frac{2}{\alpha}e^{-u^{\frac{\alpha}{2}}}$ for $u\geq 0$ when $\alpha >2$. Using the assumption that $N_{\rm r}=c\lambda^{\beta}$, as $\lambda$ goes to infinity for the given $R_{\rm d}$, we obtain
\begin{align}
\lim_{\lambda \rightarrow \infty}\frac{C_{\Sigma}^{{\rm mrc}}}{\lambda} &\geq  \frac{2}{\alpha}\log_2\left(1+ \frac{   \left(2\sinc\left(\frac{2}{\alpha}\right)\right)^{ \frac{\alpha}{2}} }{  \left( \pi \left(R_{\rm d}^2+1\right)\right)^{\frac{\alpha}{2}} } c\lambda^{\beta-\frac{\alpha}{2}}\right), \label{eq:lower_integ_22}
\end{align}
with $c>0$.

Next, we derive an upper bound on $C_{\Sigma}^{\rm mrc}$ to show $\frac{C_{\Sigma}^{{\rm mrc}}}{\lambda}=O\left(  \log_2\left(1+\lambda^{\beta-\frac{\alpha}{2}}\right)\right)$.
From Lemma \ref{lem2}, in the interference limited regime, the sum spectral efficiency is upper bounded as follows:
\begin{align}
\lambda\mathbb{E}_{H_{k,k},d_{k,k},I_k}\left[\log_2\!\left(1+\frac{H_{k,k}d_{k,k}^{-\alpha}}{I_k}\right) \right] &\leq\lambda \log_2\left(1\!+\mathbb{E}\left[d_{k,k}^{-\alpha}\right]\mathbb{E}[H_{k,k}] \mathbb{E}\left[\frac{1}{ I_k}\right] \right)\nonumber\\
&=\lambda  \log_2\left(1\!+  \frac{2\left(1-R_{\rm d}^{2-\alpha}\right)}{(\alpha-2)\left(R_{\rm d}^2-1\right)} N_{\rm r} \mathbb{E}\left[\frac{1}{ I_k}\right]\right),
\label{eq:upper_1}
\end{align}
where the last equality is due to the facts that $\mathbb{E}[H_{k,k}]=N_{\rm r}$ and $\mathbb{E}\left[d_{k,k}^{-\alpha}\right]=\frac{2\left(1-R_{\rm d}^{2-\alpha}\right)}{(\alpha-2)\left(R_{\rm d}^2-1\right)}$. To this end, we only need to compute a negative moment of the aggregated interference power. The negative moment is computed as follows:
\begin{align}
\mathbb{E}\left[\frac{1}{ I_k}\right]
&\stackrel{(a)}{=}\int_{0}^{\infty}\mathbb{E}\left[ e^{-uI_k}\right] \d u
\nonumber\\
&=\int_{0}^{\infty}e^{-\frac{\lambda\pi}{\sinc\left(\frac{2}{\alpha}\right)}u^{\frac{2}{\alpha}}} \d u
\nonumber\\
&=\frac{\Gamma\left(1+\frac{\alpha}{2}\right) \sinc\left(\frac{2}{\alpha}\right)^{\frac{\alpha}{2}}}{(\lambda \pi)^{\frac{\alpha}{2}}} , \label{eq:inter_upper}
\end{align}
where (a) follows from the relation $\mathbb{E}\left[\frac{1}{X}\right]=\mathbb{E}\left[\int_{0}^{\infty}e^{-sX} \d s\right]$ for any positive random variable $X$. Invoking (\ref{eq:inter_upper}) into (\ref{eq:upper_1}), the upper bound is given as follows:
\begin{align}
\lambda  \log_2\left(1\!+  \frac{2\left(1-R_{\rm d}^{2-\alpha}\right)}{(\alpha-2)\left(R_{\rm d}^2-1\right)} N_{\rm r} \mathbb{E}\left[\frac{1}{ I_k}\right]\right) &\leq \lambda  \log_2\left(1\!+  \frac{2\left(1\!-\!R_{\rm d}^{2-\alpha}\right)\Gamma\left(1\!+\!\frac{\alpha}{2}\right) \sinc\left(\frac{2}{\alpha}\right)^{\!\!\frac{\alpha}{2}}}{(\alpha-2)\left(R_{\rm d}^2-1\right)}  \frac{N_{\rm r}}{(\lambda \pi )^{\frac{\alpha}{2}} } \right) \nonumber \\
&=  \lambda  \log_2\!\left(\!1\!+  \frac{2\left(R_{\rm d}^{\alpha}\!-\!R_{\rm d}^{2}\right)\Gamma\left(1\!+\!\frac{\alpha}{2}\right) \sinc\left(\frac{2}{\alpha}\right)^{\!\!\frac{\alpha}{2}}}{(\alpha-2)\left(R_{\rm d}^2-1\right)}  \frac{N_{\rm r}}{(\lambda \pi R_{\rm d}^2 )^{\frac{\alpha}{2}} }\!\!\right).
\end{align} 
As a result, since $N_{\rm r}=c\lambda^{\beta}$, \begin{align}
\lim_{\lambda \rightarrow \infty}\frac{C_{\Sigma}^{{\rm mrc}} }{\lambda}\leq \log_2\left(1\!+ \frac{2\left(R_{\rm d}^{\alpha}\!-\!R_{\rm d}^{2}\right)\Gamma\left(1+\frac{\alpha}{2}\right) \sinc\left(\frac{2}{\alpha}\right)^{\frac{\alpha}{2}}}{(\alpha-2)\left(R_{\rm d}^2-1\right)  \left( \pi R_{\rm d}^2 \right)^{\frac{\alpha}{2}}}c\lambda^{\beta-\frac{\alpha}{2}}\!\right),  \label{eq:lower_integ_23}
\end{align}
with $c>0$. This completes the proof.
\end{proof}

This scaling result implies that there exists a critical  scaling of the number of receiver antennas to obtain a linear growth of $C_{\Sigma}^{\rm mrc}$, namely, $C_{\Sigma}^{\rm mrc}=\Theta(\lambda)$. To obtain a linear growth of $C_{\Sigma}^{\rm mrc}$ as the node density $\lambda$ increases, the number of receive antennas should be super-linearly scaled up with the density like $\lambda^{\frac{\alpha}{2}}$. This result backs the intuition that the receiver should boost the desired signal power more rapidly than the density to keep a constant SIR when MRC is applied. Meanwhile, for any $\beta$ with $\beta<\frac{\alpha}{2}$, the sum spectral efficiency asymptotically approaches zero because the SIR keeps decreasing as the density increases. If we scale up the number of receive antennas like $N_{\rm r}=\lambda^{\beta}$ where $\beta>\frac{\alpha}{2}$, then the sum spectral efficiency increases super-linearly with the density, i.e., like $\Theta(\lambda(\beta-\frac{\alpha}{2}) \log(\lambda))$.

One potential concern for this scaling result with the density is that the farfield assumption on the path loss model eventually does not hold as the density goes to infinity. This concern can be resolved by equivalently interpreting our scaling result in terms of the average number of interferers in the communication area, i.e., $\lambda \pi (R_{\rm d}^2-1)$. When the density is small enough to guarantee the farfield assumption with probability one (e.g., $\lambda=0.00005$), it is possible to increase the communication range $R_{\rm d}$ asymptotically, i.e., the average number of interferers goes to infinity. Then, to maintain the constant transmission rate as the average number of interferers increases, the number of receive antennas should be super-linearly scaled up with the average number of interfering transmitters, $\lambda \pi R_{\rm d}^2$ in a particular way, i.e., $N_{\rm r}=(\lambda \pi R_{\rm d}^2)^{\frac{\alpha}{2}}$. Although this interpretation could be helpful to understand the merits of using multiple receive antennas in an engineering sense, we shall characterize the capacity scaling of the network in terms of the density for the mathematical connivence in the rest of this paper.

{\bf Example 4:} When the number of receive antennas does not scale with the density, i.e., $\beta=0$, the scaling law per link boils down to $\Theta\left(\log_2\left(1+\frac{1}{\lambda^{\frac{\alpha}{2}}} \right)\right)\simeq \Theta\left(\lambda^{-\frac{\alpha}{2}}\right)$. This implies that the typical user's transmission rate goes down super-linearly with the density, and the lesser path-loss exponent causes the more transmission rate degradation. It is worthwhile to mention that this scaling result is more pessimistic than the well-known ad hoc capacity scaling law, $\Theta\left(\frac{1}{\sqrt{\lambda}}\right)$, in \cite{Gupta}. The discrepancy inherently follows from the different assumptions used in the two network models. In our model, the link association is fixed, and there is a non-zero probability that the nearest interferer's location can be arbitrary close to the typical link's receiver. Whereas, in the ad hoc network model \cite{Gupta}, the source and destination paris are randomly determined, and the transmission rate per link does not depend on the density due to the interference guard region, when the nearest neighbor routing algorithm is applied. In the ad hoc model \cite{Gupta}, instead of the transmission rate per link, the capacity scaling is crucially determined by the number of hops in a typical communication pair that is of order $\Theta\left(\frac{1}{\sqrt{\lambda}}\right)$. Whereas, in our model, the capacity scaling is decided by the transmission rate per link due to the single-hop communication constraint.

{\bf Example 5:} When $\alpha=4$ and $d_{k,k}=d$, in the interference limited regime, the proofs of Theorems \ref{Th1} and \ref{Th2} show that the sum spectral efficiency can be approximated as  
\begin{align}
C_{\Sigma}^{{\rm mrc}}\simeq \frac{2\lambda}{\ln(2)} \left\{\sin\left(\frac{\lambda \pi^2  d^2}{2\sqrt{N_{\rm r}}}\right)\left(\frac{\pi}{2}-{\rm Si}\left(\frac{\lambda \pi^2  d^2}{2\sqrt{N_{\rm r}}}\right) \right)-\cos\left(\frac{\lambda \pi^2  d^2}{2\sqrt{N_{\rm r}}}\right){\rm Ci}\left(\frac{\lambda \pi^2  d^2}{2\sqrt{N_{\rm r}}}\right)\right\}. \label{eq:m_antenna_mrc}
\end{align} 
As shown in Theorem \ref{Th2}, if we scale up the number of receive antennas with the density as $N_{\rm r}=(\lambda \pi d^2)^2$, the sum spectral efficiency is simply given by
\begin{align}
C_{\Sigma}^{{\rm mrc}}\simeq \frac{2\lambda}{\ln(2)}\left(\frac{\pi}{2}-{\rm Si}\left(\frac{  \pi}{2}\right)\right)\simeq 0.5772 \lambda.
\end{align} 
Note that this is the same expression shown in Example 1. Therefore, the role of MRC can be interpreted as virtually reducing the link distance by boosting the direct channel gain.

{\bf Example 6:}
In the network, one interesting question would be to determine the link density $\lambda$ for a given set of system parameters, which maximizes the sum of spectral efficiencies. For this, one can leverage the lower bound on the sum spectral efficiency in (\ref{eq:lower_integ_1}) to find the optimal density $\lambda^{\star}$ that maximizes the lower bound on the sum spectral efficiency. This is obtained as the solution of the optimization
 \begin{align}
\lambda^{\star}=\arg \max_{\lambda} \lambda \log_2\left(1+ \left(2\sinc\left(\frac{2}{\alpha}\right)\right)^{ \frac{\alpha}{2}} \frac{ N_{\rm r}-1 }{ \left(\lambda \pi (1+R_{\rm d}^2) \right)^{\frac{\alpha}{2}}} \right). \label{eq:max_lambda}
\end{align}
In the high SIR regime, i.e., $\log_2(1+x)\simeq \log_2(x)$, the optimal link density is
\begin{align}
\lambda^{\star}=\frac{2\sinc\left(\frac{2}{\alpha}\right)(N_{\rm r}-1)^{\frac{2}{\alpha}}}{\pi (1+R_{\rm d}^2)}.
\end{align}
This simple relationship confirms the intuition that, with MRC, the maximum link density (spatial packing performance) increases sub-linearly with respect to the number of receive antennas.

\section{Spectral Efficiency with Local CSIR } \label{Section4}
In this section, we analyze the sum spectral efficiency of networks using a successive interference cancellation method with local CSIR.

\subsection{Analtyical Characterization}

We first present an analytical expression of the sum spectral efficiency with local CSIR in the following theorem.

\begin{theorem} \label{Th3} The achievable sum spectral efficiency with local CSIR on the $L$ dominant interferers is 
\begin{align}
&C_{\Sigma}^{{\rm sic}}= \frac{\lambda}{\ln(2)}\int_{1}^{R_{\rm d}}\!\int_{0}^{\infty}\!\frac{\! \left[\!1\!-  \frac{1}{(1+zx^{-\alpha})^{N_{\rm r}}} \!\right]\!\!\mathcal{L}_{\tilde{I}_k}(L;z)}{z \exp\left(\frac{z}{{\rm SNR}}\right)  } \d z \frac{x}{ R_{\rm d}^2-1}\d x,
\end{align} 
where 
\begin{align}
&\mathcal{L}_{\tilde{I}_k}(L;z)= \!\int_{0}^{\infty}\!\!\! e^{ - \pi\lambda  \int_{r^2}^{\infty} \!\!\frac{1}{1+z^{-1} u^{\frac{\alpha}{2} } } \d u    }   \frac{2(\lambda\pi r^2)^{L}}{r \Gamma(L)} e^{-\lambda \pi r^2} \d r.  \label{eq:LP_SIC}
\end{align}
\end{theorem}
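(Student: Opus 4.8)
The plan is to evaluate $C_{\Sigma}^{\rm sic}=\frac{\lambda}{\ln 2}\,\mathbb{E}^o[\ln(1+\textrm{SINR}_k^{\rm sic})]$ by combining Lemma~\ref{lem1} with the Laplace functional of the PPP, conditioning first on the desired link distance and then on the distance to the $L$-th nearest interferer. Conditioning on $d_{k,k}=x$, set $X=\tilde H_{k,k}x^{-\alpha}$ and $Y=\tilde I_k$: both are positive, and they are independent, since the desired fading power $\tilde H_{k,k}=\|{\bf h}_{k,k}\|_2^2$ is independent of the interference (the residual-interference marks depend only on the direction ${\bf h}_{k,k}/\|{\bf h}_{k,k}\|_2$ and on the independent interferer channels and positions), while by stationarity the interferer configuration seen from receiver $k$ does not depend on $x$. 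Lemma~\ref{lem1} with $a=1/{\rm SNR}$ then yields
\begin{align}
\mathbb{E}\!\left[\ln\!\left(1+\textrm{SINR}_k^{\rm sic}\right)\big|\,d_{k,k}=x\right]=\int_0^{\infty}\frac{e^{-z/{\rm SNR}}}{z}\Big(1-\mathbb{E}\big[e^{-z\tilde H_{k,k}x^{-\alpha}}\big]\Big)\mathbb{E}\big[e^{-z\tilde I_k}\big]\,\d z .
\end{align}
Since $\tilde H_{k,k}$ is Chi-squared with $2N_{\rm r}$ degrees of freedom (under the standing assumption $L\le N_{\rm r}-1$), $\mathbb{E}[e^{-z\tilde H_{k,k}x^{-\alpha}}]=(1+zx^{-\alpha})^{-N_{\rm r}}$, which produces the bracketed factor in the statement.

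The core step is the computation of $\mathcal{L}_{\tilde I_k}(L;z)=\mathbb{E}[e^{-z\tilde I_k}]$. By Slivnyak's theorem the interfering transmitters form, under the Palm distribution of the typical pair, a homogeneous PPP of intensity $\lambda$; I would then condition on $r:=d_{k,k_L}$, the distance from receiver $k$ to its $L$-th nearest interferer. By the independence of the PPP over disjoint regions, conditionally on $\{d_{k,k_L}=r\}$ the interferers beyond distance $r$ — exactly the ones contributing to $\tilde I_k$ — form a homogeneous PPP of intensity $\lambda$ on $\{y:\|y\|_2>r\}$, carrying i.i.d.\ unit-mean exponential fading marks $\tilde H_{k,j}$ independent of the geometry, as recorded in the signal model and \cite{Caire}. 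Applying the probability generating functional of this marked PPP with $f(y)=\mathbb{E}_{\tilde H}[e^{-z\tilde H\|y\|_2^{-\alpha}}]=(1+z\|y\|_2^{-\alpha})^{-1}$, then passing to polar coordinates and substituting $u=v^{2}$,
\begin{align}
\mathbb{E}\!\left[e^{-z\tilde I_k}\big|\,d_{k,k_L}=r\right]=\exp\!\left(-2\pi\lambda\int_r^{\infty}\frac{zv}{v^{\alpha}+z}\,\d v\right)=\exp\!\left(-\pi\lambda\int_{r^{2}}^{\infty}\frac{\d u}{1+z^{-1}u^{\alpha/2}}\right),
\end{align}
which is finite because $\alpha>2$. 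Mapping the planar PPP onto the half-line shows that $\pi d_{k,k_L}^{2}$ is Gamma distributed with shape $L$ and rate $\lambda$, so $d_{k,k_L}$ has density $\frac{2(\lambda\pi r^{2})^{L}}{r\,\Gamma(L)}e^{-\lambda\pi r^{2}}$; integrating the conditional Laplace transform against this density gives the stated expression for $\mathcal{L}_{\tilde I_k}(L;z)$.

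To finish, I would de-condition on $d_{k,k}$ using the density of the link distance on $[1,R_{\rm d}]$ induced by the uniform-in-the-annulus placement of the receiver, and multiply by $\lambda/\ln 2$; collecting the three factors gives the claimed formula. I expect the main obstacle to be the conditioning argument behind $\mathcal{L}_{\tilde I_k}(L;z)$: one must argue carefully that cancelling the $L$ nearest interferers leaves, conditionally on $d_{k,k_L}=r$, an \emph{undisturbed} PPP of intensity $\lambda$ on the exterior of the disk of radius $r$ (so that the generating-functional computation applies verbatim), together with the fact that the QR/ZF-SIC front end preserves both the unit-mean exponential law of the residual-interference fading powers and their independence from the point process. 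Everything else is routine manipulation of Laplace transforms and changes of variable.
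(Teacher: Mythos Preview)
Your proposal is correct and follows essentially the same route as the paper: condition on $d_{k,k}$, invoke Lemma~\ref{lem1}, use the Chi-squared Laplace transform for the desired term, compute $\mathcal{L}_{\tilde I_k}(L;z)$ by conditioning on the $L$-th nearest interferer's distance and applying the PGFL of the PPP restricted to the exterior disk, then average over the nearest-neighbor distance law and finally over $d_{k,k}$. If anything, you are more explicit than the paper about the independence of $\tilde H_{k,k}$ and $\tilde I_k$ (via the norm--direction decomposition of ${\bf h}_{k,k}$) and about why the residual interferers, conditionally on $d_{k,k_L}=r$, form an unthinned PPP on $\{\|y\|_2>r\}$; the paper simply asserts these facts.
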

\begin{proof}
See Appendix \ref{proof:Th3}.
\end{proof}

The main difference with the expression in Theorem \ref{Th1} is the Laplace transform of the aggregated interference power, which reflects the effect of interference cancellation by ZF-SIC. To provide more intuition on the expression in Theorem \ref{Th3}, it is instructive to consider an example.

{\bf Example 7:} When $\alpha=4$, we have a closed form expression for the Laplace transform of $\tilde{I}_k$ in terms of a Bessel function. Conditioning on the fact that the $L$th nearest interferer's distance is equal to $r$, $d_{k,k_{L}}\!\!=\!r$, this Laplace transform is lower bounded as
\begin{align}
\mathcal{\tilde L}_{\tilde{I}_k}(L;z)\!\!&=\mathbb{E}\left[e^{-z \tilde{I}_k} \mid \{  d_{k,k_{L}}=\!r\}\right] \nonumber \\
\!\!&\geq \exp\left(-z \mathbb{E}[\tilde{I}_k \mid  \{  d_{k,k_{L}}=\!r] \right) \nonumber \\
&=\exp\left(-\frac{z \lambda \pi}{r^2}\right), \label{eq:LP_approx}
\end{align}
where the inequality follows from Jensen's inequality and the last equality is due to Campbell's theorem. By unconditioning (\ref{eq:LP_approx}) with respect to $r$,  we obtain
\begin{align}
\mathcal{ L}_{\tilde{I}_k}(L;z)&\geq
\int_{0}^{\infty}\exp\left(-\frac{z \lambda \pi}{r^2}\right)\exp\left(-\lambda \pi r^2\right)\frac{2( \lambda \pi r^2)^L}{r\Gamma(L)}\d r \nonumber \\
&= \frac{2(\lambda \pi)^{L}z^{L/2}}{\Gamma(L)}B_L\left(2\lambda \pi \sqrt{z}\right),\label{eq:appro_LP_bessel}
\end{align} 
where $B_L(x)$ denotes the modified Bessel function of the first kind. By replacing (\ref{eq:LP_SIC}) into (\ref{eq:appro_LP_bessel}), we have 
\begin{align}
&C_{\Sigma}^{{\rm sic}}\geq  \lambda\!\int_{1}^{R_{\rm d}}\int_{0}^{\infty}\!\frac{\! \left[\!1\!- \frac{1}{(1+zx^{-4})^{N_{\rm r}}}\!\right]\frac{2(\lambda \pi)^{L}z^{L/2}}{\Gamma(L)}B_L\left(2\lambda \pi \sqrt{z}\right)}{z \exp\left(\frac{z}{{\rm SNR}}\right)  } \d z \frac{x}{ R_{\rm d}^2-1}\d x. 
\end{align}
Since this expression involves fewer integrals, it is easier to compute. Further, we observe that, given $d_{k,k}=x$, the sum spectral efficiency improves as $L$ increases since $\frac{2(\lambda \pi)^{L}z^{L/2}}{\Gamma(L)}B_L\left(2\lambda \pi \sqrt{z}\right)$ is an increasing function with respect to $L$. This confirms the intuition that interference cancellation improves the sum spectral efficiency.

\begin{figure}
\centering
\includegraphics[width=4.5in]{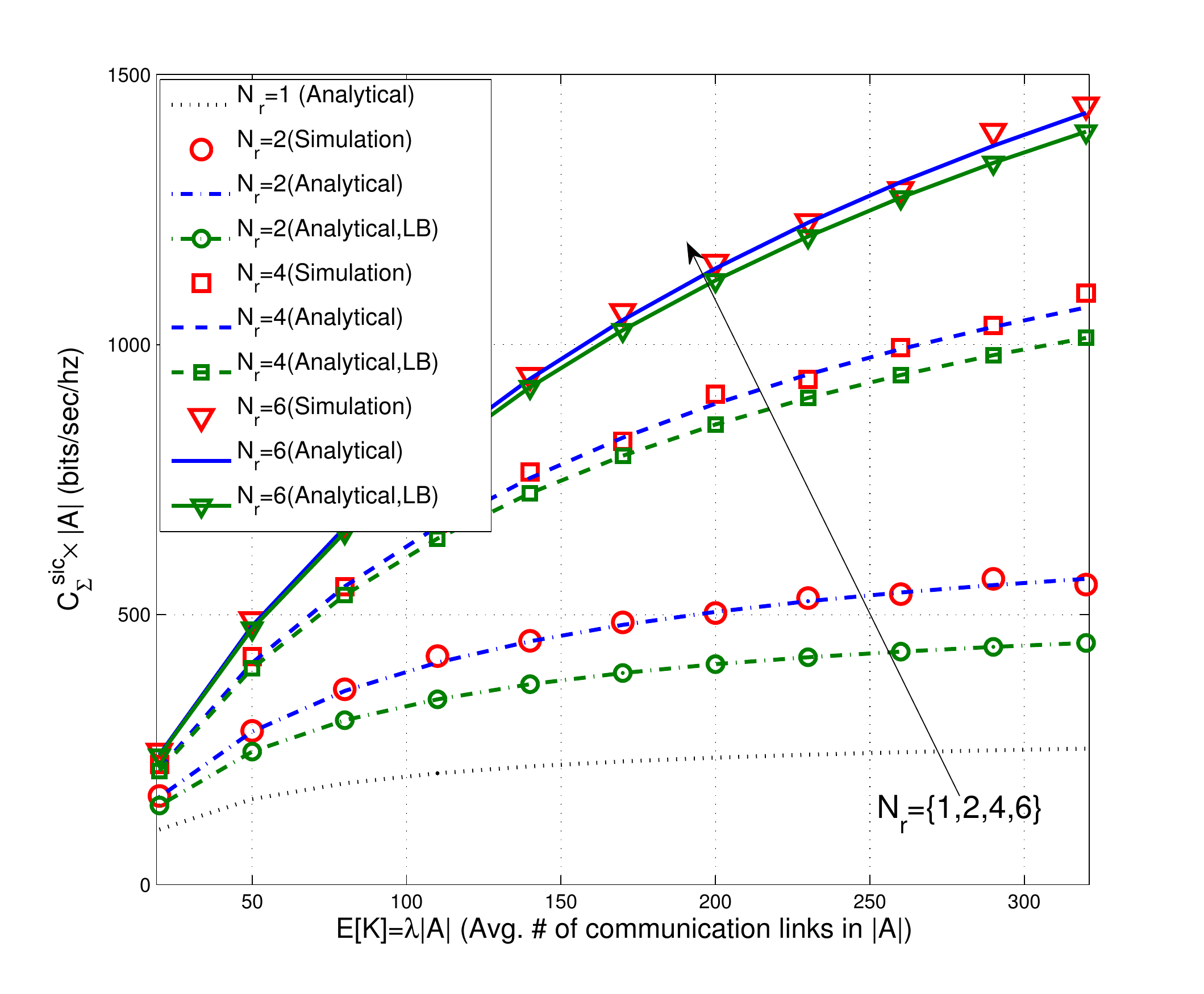} 
\caption{The sum of spectral efficiency achieved with local CSIR when $|\mathcal{A}|=\pi 500^2$ ($m^2$), $\alpha=4$, $R_{\rm d}=50$m, $P=20$ (dBm), and $\sigma^2=-104$ (dBm). }
\label{fig:2}
\end{figure}

Fig. \ref{fig:2} shows a comparison of the sum spectral efficiency achieved by ZF-SIC when $\alpha=4$ and $R_{\rm d}=50$m. The match between analytically and numerically obtained sum spectral efficiencies is excellent. Further, the simple lower bound expression given in (\ref{eq:appro_LP_bessel}) becomes tight as the number of antennas increases. 
   
\subsection{Scaling Law}
By simplifying the exact expression in Theorem \ref{Th3}, we derive the scaling law of the network with local CSIR.

 \begin{theorem}[Scaling law with local CSIR] \label{Th4} Assume that $N_{\rm r}=c\lambda^{\beta}$ for some $c>0$ and $\beta\geq 0$. Then, in interference limited networks, the spectral efficiency of the typical link scales as
\begin{align}
  \frac{C_{\Sigma}^{{\rm sic}}}{\lambda} = \Theta\left(  \log_2\left(1+\lambda^{\frac{\alpha}{2}(\beta-1)}\right)\right),
\end{align}
as $\lambda \rightarrow \infty$.
\end{theorem}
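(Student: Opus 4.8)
The plan is to bracket $C_{\Sigma}^{{\rm sic}}$ between $\lambda$ times two logarithms with matching polynomial arguments, reusing the template of the proof of Theorem~\ref{Th2} but with the post-cancellation interference $\tilde{I}_k$ in place of $I_k$; throughout I take the number of cancelled interferers to grow with the antenna budget, $L=N_{\rm r}-1=\Theta(\lambda^{\beta})$ (a constant $L$ merely reproduces the MRC exponent, and indeed the case $\beta=0$ reduces to Theorem~\ref{Th2}). The two facts about $\tilde{I}_k$ that drive the argument both follow from conditioning on the distance $d_{k,k_L}$ to the $L$-th nearest interferer, after which the surviving interferers form a PPP of intensity $\lambda$ on $\mathcal{B}(0,d_{k,k_L})^{c}$ with i.i.d.\ unit-mean exponential marks. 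First, by Campbell's theorem $\mathbb{E}[\tilde{I}_k\mid d_{k,k_L}=r]=\frac{2\pi\lambda}{\alpha-2}r^{2-\alpha}$, and since $\pi\lambda\,d_{k,k_L}^{2}$ is $\Gamma(L,1)$-distributed (the $L$-th nearest point of a planar PPP), unconditioning gives $\mathbb{E}[\tilde{I}_k]=\frac{2(\pi\lambda)^{\alpha/2}}{\alpha-2}\frac{\Gamma(L+1-\alpha/2)}{\Gamma(L)}=\Theta(\lambda^{\alpha/2}L^{1-\alpha/2})$ via $\Gamma(L+a)/\Gamma(L)=\Theta(L^{a})$. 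Second, the $i$-th surviving interferer is the $(L+i)$-th nearest overall, so $\tilde{I}_k\ge d_{k,k_{2L}}^{-\alpha}\,G_L$ where $G_L=\sum_{i=L+1}^{2L}\tilde{H}_{k,k_i}\sim\Gamma(L,1)$ is independent of $d_{k,k_{2L}}$, for which $\pi\lambda\,d_{k,k_{2L}}^{2}\sim\Gamma(2L,1)$; hence $\mathbb{E}[1/\tilde{I}_k]\le\mathbb{E}[d_{k,k_{2L}}^{\alpha}]\,\mathbb{E}[1/G_L]=(\pi\lambda)^{-\alpha/2}\frac{\Gamma(2L+\alpha/2)}{\Gamma(2L)}\cdot\frac{1}{L-1}=O(\lambda^{-\alpha/2}L^{\alpha/2-1})$.

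For the lower bound I apply the left inequality of Lemma~\ref{lem2} with $X=\tilde{H}_{k,k}d_{k,k}^{-\alpha}$ and $Y=\tilde{I}_k$, bound $e^{\mathbb{E}[\ln\tilde{H}_{k,k}]}\ge N_{\rm r}-1$ exactly as in the proof of Theorem~\ref{Th2} (since $\tilde{H}_{k,k}$ has the same $\chi^{2}_{2N_{\rm r}}$ law as $H_{k,k}$), then invoke Lemma~\ref{lem1} to rewrite the bound as $\frac{\lambda}{\ln 2}\int_{0}^{\infty}\frac{1}{z}(1-e^{-z(N_{\rm r}-1)})\,\mathbb{E}[e^{-z\,d_{k,k}^{\alpha}\tilde{I}_k}]\,\d z$, apply Jensen's inequality $\mathbb{E}[e^{-z\,d_{k,k}^{\alpha}\tilde{I}_k}]\ge e^{-z\,\mathbb{E}[d_{k,k}^{\alpha}]\,\mathbb{E}[\tilde{I}_k]}$ (independence of $d_{k,k}$ and $\tilde{I}_k$), and evaluate the resulting integral with the Frullani-type identity $\int_{0}^{\infty}z^{-1}(1-e^{-az})e^{-bz}\,\d z=\ln(1+a/b)$. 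This yields $C_{\Sigma}^{{\rm sic}}\ge\lambda\log_2(1+\frac{N_{\rm r}-1}{\mathbb{E}[d_{k,k}^{\alpha}]\,\mathbb{E}[\tilde{I}_k]})$, and substituting the scaling of $\mathbb{E}[\tilde{I}_k]$ with $N_{\rm r}=c\lambda^{\beta}$, $L=\Theta(\lambda^{\beta})$ makes the argument $\Theta(\lambda^{\beta}/\lambda^{\alpha/2+\beta(1-\alpha/2)})=\Theta(\lambda^{\frac{\alpha}{2}(\beta-1)})$, so $C_{\Sigma}^{{\rm sic}}/\lambda=\Omega(\log_2(1+\lambda^{\frac{\alpha}{2}(\beta-1)}))$.

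For the upper bound I apply the right inequality of Lemma~\ref{lem2}, $C_{\Sigma}^{{\rm sic}}\le\lambda\log_2(1+N_{\rm r}\,\mathbb{E}[d_{k,k}^{-\alpha}]\,\mathbb{E}[1/\tilde{I}_k])$ (with $\mathbb{E}[\tilde{H}_{k,k}]=N_{\rm r}$ and $\mathbb{E}[d_{k,k}^{-\alpha}]=\Theta(1)$ since $d_{k,k}\in[1,R_{\rm d}]$), and insert the second fact above; with $N_{\rm r}=c\lambda^{\beta}$ and $L=\Theta(\lambda^{\beta})$ the argument becomes $O(\lambda^{\beta}\cdot\lambda^{-\alpha/2}\lambda^{\beta(\alpha/2-1)})=O(\lambda^{\frac{\alpha}{2}(\beta-1)})$, hence $C_{\Sigma}^{{\rm sic}}/\lambda=O(\log_2(1+\lambda^{\frac{\alpha}{2}(\beta-1)}))$. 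Combining the two bounds, and using $\log_2(1+c\lambda^{\gamma})=\Theta(\log_2(1+\lambda^{\gamma}))$ for any fixed $\gamma$ (checked separately for $\gamma>0$, $\gamma=0$, $\gamma<0$), gives the claimed $\Theta$.

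The step needing the most care is the negative moment $\mathbb{E}[1/\tilde{I}_k]$. Unlike $\mathbb{E}[1/I_k]$ in Theorem~\ref{Th2}, which is a clean closed form because $I_k$ is shot noise over the whole plane, here ZF-SIC excises exactly the $L$ nearest interferers, so $\tilde{I}_k$ is shot noise over a PPP with a random hole and its Laplace transform has no closed form (cf.\ Theorem~\ref{Th3}); moreover $1/\tilde{I}_k$ fails to be integrable if one lower-bounds $\tilde{I}_k$ by too few surviving interferers. Retaining precisely the interferers ranked $L+1,\dots,2L$ is what simultaneously makes the expectation finite --- through $\mathbb{E}[1/G_L]=\frac{1}{L-1}$ for $G_L\sim\Gamma(L,1)$ --- and produces the extra $1/L$ factor that sharpens the MRC-type exponent $\lambda^{\beta-\alpha/2}$ into the local-CSIR exponent $\lambda^{\frac{\alpha}{2}(\beta-1)}$ needed to match the lower bound. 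The remaining checks (that $L+1-\alpha/2>0$ eventually, so $\Gamma(L+a)/\Gamma(L)=\Theta(L^{a})$ applies, and that the two $\Theta$-arguments coincide) are routine.
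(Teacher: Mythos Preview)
Your proposal is correct; the lower bound is essentially the paper's argument (though your detour through Lemma~\ref{lem1} and the Frullani identity is unnecessary --- applying Lemma~\ref{lem2} directly with $X=\tilde H_{k,k}$ and $Y=d_{k,k}^{\alpha}\tilde I_k$, then using independence to write $\mathbb E[Y]=\mathbb E[d_{k,k}^{\alpha}]\,\mathbb E[\tilde I_k]$, lands you at the same logarithm in one step, which is what the paper does).

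The upper bound is where you genuinely diverge from the paper, and your route is more elementary. The paper computes $\mathbb E[1/\tilde I_k]=\int_0^\infty \mathcal L_{\tilde I_k}(s)\,\d s$ by writing the conditional Laplace transform via the PGFL, obtaining an expression involving the Gauss hypergeometric function ${}_2F_1(1,1-\tfrac{2}{\alpha},2-\tfrac{2}{\alpha};-v)$, unconditioning against the $\Gamma(N_{\rm r}-1,1)$ law of $\pi\lambda d_{k,k_L}^2$, and then bounding the remaining $v$-integral to extract the factor $\frac{\Gamma(N_{\rm r}-1+\alpha/2)}{\Gamma(N_{\rm r}-1)}\cdot\frac{\alpha}{2(N_{\rm r}-1)}$. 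Your argument bypasses all of this by the probabilistic lower bound $\tilde I_k\ge d_{k,k_{2L}}^{-\alpha}G_L$ with $G_L\sim\Gamma(L,1)$ independent of $d_{k,k_{2L}}$; the two Gamma moments $\mathbb E[d_{k,k_{2L}}^{\alpha}]$ and $\mathbb E[1/G_L]=\tfrac{1}{L-1}$ then give the same $O(\lambda^{-\alpha/2}L^{\alpha/2-1})$ scaling with no special functions and no integral estimates. What the paper's approach buys is an explicit (not just order-of-magnitude) constant in the upper bound, tied directly to the exact Laplace transform already established in Theorem~\ref{Th3}; what your approach buys is transparency --- it makes visible that the extra $1/L$ factor relative to MRC comes simply from inverting a sum of $L$ exponentials, and it would port immediately to non-exponential fading or other point processes where the PGFL is less tractable.
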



\begin{proof}
We begin with the proof of $ \frac{C_{\Sigma}^{{\rm sic}}}{\lambda} = \Omega\left(  \log_2\left(1+\lambda^{\frac{\alpha}{2}(\beta-1)}\right)\right)$ by deriving a lower bound with a closed from. Applying Lemma \ref{lem2}, we obtain the following lower bound on the sum spectral efficiency achieved by ZF-SIC:
\begin{align}
\lambda\mathbb{E}\left[\log_2\!\left(1+\frac{\tilde{H}_{k,k}}{d_{k,k}^{\alpha}{\tilde I}_k  }\right) \right] \geq \lambda\log_2\left(1\!+\!\frac{ e^{\mathbb{E}\left[\ln(\tilde{H}_{k,k})\right]} }{ \frac{2}{(2+\alpha)}R^{\alpha}_{{\rm d} }\mathbb{E}\left[{\tilde I}_k\right] }\right), \label{eq:lower_1}
\end{align}
where we use the fact that $\mathbb{E}\left[d_{k,k}^{\alpha}\right]=\frac{2\left(R_{\rm d}^{\alpha+2}-1\right)}{(\alpha-2)(R_{\rm d}^2 -1)}\leq \frac{2}{2+\alpha}R_{\rm d}^{\alpha}$. Furthermore, from (\ref{eq:mean_sig_mrc}), we know
\begin{align}
e^{\mathbb{E}\left[\ln(\tilde{H}_{k,k})\right]}= e^{\ln(2)+\psi\left(\frac{N_{\rm r}}{2}\right)}. \label{eq:mean_sig1}
\end{align}
Next, we need to compute the expectation of $\tilde{I}_k$. Conditioned on $d_{k,k_L}=r$, the aggregated interference power from the disk with radius $r$ is 
\begin{align}
\mathbb{E} \left[\tilde{I}_k \mid \{d_{k,k_{L}}=r\}\right] =  \frac{2\pi \lambda}{\alpha-2}  r^{2-\alpha}. \label{eq:cond_power_avg1}
\end{align}
Unconditioning with respect to the distance distribution of $d_{k,k_{L}}$ given in \cite{Haenggi}, we obtain the averaged interference power as
\begin{align}
\mathbb{E}[\tilde{I}_k]&=\mathbb{E}_r\mathbb{E} \left[\tilde{I}_k \mid \{d_{k,k_{L}}=r\}\right]
\nonumber \\
&=\frac{2\pi \lambda}{\alpha-2} \int_{0}^{\infty} r^{2-\alpha}\exp\left(-2\pi \lambda r^2\right)\frac{2(2\pi \lambda r^2)^{L}}{r\Gamma(L)}\d r  \nonumber \\
&=\frac{(2\pi \lambda)^{\frac{\alpha}{2}}\Gamma(1-\frac{\alpha}{2}+L)}{(\alpha-2)\Gamma(L)}. \label{eq:mean_inter}
\end{align}
Plugging (\ref{eq:mean_sig1}) and (\ref{eq:mean_inter}) into (\ref{eq:lower_1}), we have the lower bound 
\begin{align}
C_{\Sigma}^{\rm sic} \geq  \lambda\log_2\!\left(1\!+\!\frac{ e^{\left[\ln(2)+\psi(N_{\rm r})  \right] 
}  }{  \frac{2}{2+\alpha}R^{\alpha}_{{\rm d} }\frac{(2\pi \lambda)^{\!\frac{\!\alpha}{2}}\Gamma(1-\frac{\alpha}{2}+L)}{(\alpha-2)\Gamma(L)} }\right).
\end{align}
Since $e^{\ln(2)+\psi(N_{\rm r})}\geq  N_{\rm r}-1$, we can rewrite the lower bound as follows:
\begin{align}
&C_{\Sigma}^{\rm sic} \geq  \lambda \log_2\!\left(1\!+\frac{1}{\frac{2^{\!\frac{\!\alpha}{2}+1}}{(\alpha-2)(\alpha+2)}}  \frac{ (N_{\rm r}-1) \Gamma(L)}{(\lambda \pi R_{\rm d}^2)^{\!\frac{\!\alpha}{2}}\Gamma(1-\frac{\alpha}{2}+L) }\right). \label{eq:sic_scaling}
\end{align}
To reach the scaling law, we use the following inequality in \cite{Kershaw}
\begin{align}
\frac{ \Gamma(L)}{ \Gamma\left(1-\frac{\alpha}{2}+L\right) } \geq \left(L-\frac{\alpha}{2}\right)^{\frac{\alpha}{2}-1},
\end{align}
for $L\geq 1$ and $\alpha\geq 2$. Since the receiver is able to cancel the $L=N_{\rm r}-1$ nearest interferers, using the inequality, the lower bound is given by 
\begin{align}
C_{\Sigma}^{\rm sic}  &\geq  \lambda\log_2\!\left(1\!+\frac{1 }{\frac{2^{\!\frac{\!\alpha}{2}+1}}{(\alpha-2)(\alpha+2)}}  \frac{ (N_{\rm r}-1)\left(N_{\rm r}-1-\frac{\alpha}{2}\right)^{\frac{\alpha}{2}-1}  }{(\lambda \pi R_{\rm d})^{\!\frac{\!\alpha}{2}}   } \right) \nonumber \\
&= \lambda\log_2\!\left(1+\frac{1 }{\frac{2^{\!\frac{\!\alpha}{2}+1}}{(\alpha-2)(\alpha+2)}}  \frac{ ( \lambda^{\beta}\pi R_{\rm d}^2-1)\left(\lambda^{\beta}\pi R_{\rm d}^2-1-\frac{\alpha}{2}\right)^{\frac{\alpha}{2}-1}  }{(\lambda \pi R_{\rm d})^{\!\frac{\!\alpha}{2}}   } \right),
\end{align}
where the last equality comes from the assumption that the number of antennas can be scaled with the density up to linearly $N_{\rm r} = c\lambda^{\beta}$. As the density goes to infinity, 
\begin{align}
\lim_{\lambda \rightarrow \infty} \frac{C_{\Sigma}^{\rm sic} }{\lambda}  &=\Omega(\log_2(1+\lambda^{\frac{\alpha}{2}(\beta-1)}))  
.\label{eq:sic_scaling_final}
\end{align}


Now, let us prove $ \frac{C_{\Sigma}^{{\rm sic}}}{\lambda} = O\left(  \log_2\left(1+\lambda^{\frac{\alpha}{2}(\beta-1)}\right)\right)$ by deriving an upper bound with a closed from. Applying Lemma \ref{lem2}, in the interference limited regime, an upper bound on the sum spectral efficiency is given by
\begin{align}
C_{\Sigma}^{\rm sic} \leq \lambda\mathbb{E}_{d_{k,k}} \log_2\left(1+\frac{2\left(1-R_{\rm d}^{2-\alpha}\right)}{(\alpha-2)(R_{\rm d}^2-1)}N_{\rm r}\mathbb{E}\left[\frac{1}{\tilde{I}_k}\right]\right),\label{eq:upper_sic_1}
\end{align}
where we used the facts that $\tilde{H}_{k,k}$, $d_{k,k}$, and ${\tilde I}_k$ are independent and $\mathbb{E}[{\tilde H}_{k,k}] =N_{\rm r}$ and $\mathbb{E}\left[d_{k,k}^{-\alpha}\right]=\frac{2\left(1-R_{\rm d}^{2-\alpha}\right)}{(\alpha-2)(R_{\rm d}^2-1)}$. Thus, we need to compute a negative moment of ${\tilde I}_k$. Using the fact that $\mathbb{E}\left[\frac{1}{X}\right]=\int_{0}^{\infty}\mathbb{E}[e^{-sX}] \d s$, the negative moment is
\begin{align}
\mathbb{E}\left[\frac{1}{{\tilde I}_k}\right]&=\int_{0}^{\infty}\mathbb{E}\left[e^{-s{\tilde I}_k}\right] \d s \nonumber\\
&\stackrel{(a)}{=}\int_{0}^{\infty}\mathbb{E}_r\left[e^{-\lambda \pi \int_{r^2}^{\infty}\frac{1}{1+s^{-1}u^{\frac{\alpha}{2}}} \d u} \right]\d s \nonumber \\
&=\mathbb{E}_r\left[ \int_{0}^{\infty} e^{-\lambda \pi \frac{2}{\alpha-2}r^2 sr^{-\alpha}{}_2F_1\left(1,1-\frac{2}{\alpha},2-\frac{2}{\alpha};sr^{-\alpha}\right)   } \d s \right]
\nonumber \\
&\stackrel{(b)}{=}\mathbb{E}_r\left[ \int_{0}^{\infty} r^{\alpha} e^{-\lambda \pi \frac{2}{\alpha-2} v r^2~{}_2F_1\left(1,1-\frac{2}{\alpha},2-\frac{2}{\alpha};-v\right)   } \d v \right]
\nonumber \\
&= \int_{0}^{\infty} \mathbb{E}_r\left[ r^{\alpha}e^{-\lambda \pi \frac{2}{\alpha-2} v r^2~{}_2F_1\left(1,1-\frac{2}{\alpha},2-\frac{2}{\alpha};-v\right)   } \right]\d v \nonumber \\
&\stackrel{(c)}{=}  \frac{1}{(2\pi\lambda)^{\frac{\alpha}{2}} }\frac{\Gamma\left(N_{\rm r}-1+\frac{\alpha}{2}\right)}{\Gamma(N_{\rm r}-1)}\int_{0}^{\infty} \!\!\!\!\frac{1}{\left[1+\frac{2v~{}_2F_1\left(1,1-\frac{2}{\alpha},2-\frac{2}{\alpha};-v\right)}{\alpha-2}\right]^{\frac{\alpha}{2}+N_{\rm r}-1}} \d v,
\end{align}
where (a) follows from the probability generating functional of the PPP, (b) is due to the variable change $v=sr^{-\alpha}$ and ${}_2F_1(\cdot)$ denotes the Gauss hypergeometric function, (c) follows from the distance distribution of the $(N_{\rm r}-1)$th nearest interferer from the $k$th receiver, given in \cite{Haenggi}. Using the following inequalities: 
\begin{align}
\int_{0}^{\infty}\frac{1}{\left[1+\frac{2v~{}_2F_1\left(1,1-\frac{2}{\alpha},2-\frac{2}{\alpha};-v\right)}{\alpha-2}\right]^{\frac{\alpha}{2}+N_{\rm r}-1}} \d v &=\int_{0}^{\infty} \!\!\!\!\frac{1}{\left[1+v^{\frac{2}{\alpha}}\int_{v^{-\frac{2}{\alpha}}}^{\infty}\frac{1}{1+u^{\frac{\alpha}{2}}} \d u\right]^{\frac{\alpha}{2}+N_{\rm r}-1}} \d v
\nonumber \\ 
&\leq \int_{0}^{\infty} \!\!\!\!\frac{1}{\left[1+v^{\frac{2}{\alpha}} \right]^{\frac{\alpha}{2}+N_{\rm r}-1}} \d v  \nonumber \\ 
&\leq \int_{1}^{\infty} \!\!\!\!\frac{1}{(v^{\frac{2}{\alpha}})^{\frac{\alpha}{2}+N_{\rm r}-1}} \d v,
\end{align}
for $\alpha>2$ and $N_{\rm r}>1$, we get the upper bound on the negative moment as follows:
\begin{align}
\mathbb{E}\left[\frac{1}{{\tilde I}_k}\right] &\leq  \frac{1}{(2\pi\lambda)^{\frac{\alpha}{2}} }\frac{\Gamma\left(N_{\rm r}-1+\frac{\alpha}{2}\right)}{\Gamma(N_{\rm r}-1)}\int_{1}^{\infty} \!\!\!\!\frac{1}{(v^{\frac{2}{\alpha}})^{\frac{\alpha}{2}+N_{\rm r}-1}} \d v \nonumber\\
&=  \frac{1}{(2\pi\lambda)^{\frac{\alpha}{2}} }\frac{\Gamma\left(N_{\rm r}-1+\frac{\alpha}{2}\right)}{\Gamma(N_{\rm r}-1)}\frac{\alpha}{2(N_{\rm r}-1)}.\label{eq:inter_upper_sic}
\end{align}
Plugging (\ref{eq:inter_upper_sic}) into (\ref{eq:upper_sic_1}), from the fact that $\frac{\Gamma\left(N_{\rm r}-1+\frac{\alpha}{2}\right)}{\Gamma(N_{\rm r}-1)}\leq (N_{\rm r}-1)^{\frac{\alpha}{2}}$ we get the upper bound
\begin{align}
C_{\Sigma}^{\rm sic} &\leq \lambda \log_2\left(1+\frac{\alpha}{ 2^{\frac{\alpha}{2}+1}} \frac{(N_{\rm r}-1)^{\frac{\alpha}{2}}}{(\pi\lambda )^{\frac{\alpha}{2}} } \frac{N_{\rm r}}{N_{\rm r}-1}\frac{2\left(1-R_{\rm d}^{2-\alpha}\right)}{(\alpha-2)(R_{\rm d}^2-1)}\right).\label{eq:upper_sic_2}
\end{align}
Assuming that $N_{\rm r} =c\lambda^{\beta}  $, as the density goes to infinity, we get
\begin{align}
  \frac{C_{\Sigma}^{\rm sic} }{\lambda} =  O\left(\log_2\left(1+\lambda^{\frac{\alpha}{2}(\beta-1)}\right)\right),
\end{align}
as $\lambda \rightarrow \infty$. This completes the proof.


\end{proof}


This scaling law is remarkable in that the sum spectral efficiency improves linearly with the density even if the number of receive antennas scales up linearly with the density. This indicates that the linear capacity scaling law is achievable with less receive antennas than MRC.
Furthermore, when the number of antenna increases like $\lambda^{\beta}$ for $\beta >1$, the sum spectral efficiency increases super-linearly with the density with a multiplicative gain of $\frac{\alpha}{2}(\beta-1)$, which is proportional to the path loss exponent $\alpha$. This multiplicative gain in the capacity scaling comes from the performance improvements by the dominant interference cancellation. 
These observations advocate that, without transmit cooperation, near-capacity-achieving point-to-point coding is able to provide significant performance gain by an appropriate combination of strong interference cancellation and treating weak interference
as noise. A similar observation was also made in single antenna ad hoc systems using simultaneous decoding of strong interfering signals at receivers \cite{Baccelli_Tse}.

It is also worth to note that the scaling law attained by ZF-SIC can be obtained with a constant rate loss when partial zero-forcing (PZF) in \cite{Jindal_PZF} is applied. For example, when $\alpha=4$, we choose the number of interferers being cancelled to be $\frac{N_{\rm r}}{2}$ while boosting the desired signal power using the remaining antenna degrees of freedom $\frac{N_{\rm r}}{2}$. This case can equivalently be interpreted to the case where receivers apply ZF-SIC with $\frac{N_{\rm r}}{2}$ receive antennas. 

%
%
%

\section{Effects of Antenna Correlation and Bounded Pathloss Function} \label{Section5}
In this section, we analyze the impact of receive antenna correlation and bounded path loss function on the sum of spectral efficiencies and its scaling behavior when the receiver is aware of direct CSIR.

\subsection{Antenna Correlation Effect}
\subsubsection{Correlation Model}
To incorporate the effect of correlation, we model the channel vector ${\bf h}_{k,{\ell}}$ as
\begin{align}
{\bf h}_{k,{\ell}}={\bf C}^{\frac{1}{2}}{\bf \tilde h}_{k,{\ell}},
\end{align}
where ${\bf C} \in \mathbb{R}^{N_{\rm r}\times N_{\rm r}}$ denotes a receive antenna correlation matrix, which is assumed to have the positive ordered eigenvalues $\{\mu_1,\ldots,\mu_{r}\}$, $\mu_n\geq \mu_m$ for $n<m$  i.e., ${\rm rank}({\bf C})=r$ with $r\leq N_{\rm r}$. Further, the entries of ${\bf \tilde h}_{k,{\ell}}$ are IID complex Gaussian random variables, each with zero mean and unit variance. The eigenvalues can be different depending on the antenna structure. For example, it has been shown experimentally that the spatial correlation matrix of a uniform linear array antenna is well represented by the exponential model introduced in \cite{Loyka}. For mathematical convenience, we assume that ${\bf C}$ is fixed and compute the ergodic rate with respect to the fadings. This assumption is valid because the second-order statistics of ${\bf C}$ change slowly relative to the fadings in time.

The following lemma quantifies us how the antenna correlation changes the effective fading distributions in both the direct and the interfering links.

\begin{lemma} \label{lem:correlation}[Fading distributions with antenna correlation] For the antenna correlation matrix ${\bf C}$, the fading distribution of $H_{k,k}=| {\bf  w}_k^{H}{\bf  h}_{k,k} |^2$ is the sum of independent exponential random variables with means $\left\{ \mu_1,\ldots, \mu_{r} \right\}$. Further, the fading distribution for the interfering link, $H_{k,\ell}=|{\bf  w}_k^{H}{\bf h}_{k,\ell}|^2$ for $k \neq \ell$ satisfies:\begin{align}
\mathbb{P}\left[H_{k,\ell}>x\right] \leq \exp\left(-\frac{x}{\mu_{1}}\right).
\end{align}
\end{lemma}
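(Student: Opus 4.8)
The plan is to exploit the unitary invariance of the i.i.d.\ complex Gaussian vector ${\bf \tilde h}_{k,\ell}$ together with the eigen-decomposition ${\bf C} = {\bf U}{\bf \Lambda}{\bf U}^H$, with ${\bf U}$ unitary and ${\bf \Lambda} = {\rm diag}(\mu_1,\ldots,\mu_r,0,\ldots,0)$, and to recall that, since only direct CSIR is available, the receiver still uses MRC, i.e.\ ${\bf w}_k = {\bf h}_{k,k}/\|{\bf h}_{k,k}\|_2$.

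For the direct link, I would first note that $H_{k,k} = |{\bf w}_k^H {\bf h}_{k,k}|^2 = \|{\bf h}_{k,k}\|_2^2 = {\bf \tilde h}_{k,k}^H {\bf C}{\bf \tilde h}_{k,k}$. Setting ${\bf g} = {\bf U}^H {\bf \tilde h}_{k,k}$, which is again a vector of i.i.d.\ $\mathcal{CN}(0,1)$ entries because ${\bf U}$ is unitary, yields $H_{k,k} = \sum_{n=1}^{r}\mu_n |g_n|^2$. Since each $|g_n|^2$ is an exponential random variable with mean one and the $g_n$ are mutually independent, $H_{k,k}$ is the sum of $r$ independent exponentials with means $\mu_1,\ldots,\mu_r$, which is the claim. (Taking ${\bf C}={\bf I}$, hence $r = N_{\rm r}$ and all $\mu_n=1$, recovers the Chi-squared law of Section III.)

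For the interfering link, the idea is to condition on ${\bf w}_k$. Since ${\bf w}_k$ is a function of ${\bf h}_{k,k}$ alone and distinct links have independent fading vectors, ${\bf w}_k$ is independent of ${\bf h}_{k,\ell}$ for $\ell\neq k$. Conditionally on ${\bf w}_k$, I would write ${\bf w}_k^H {\bf h}_{k,\ell} = ({\bf C}^{1/2}{\bf w}_k)^H {\bf \tilde h}_{k,\ell}$, which (using that ${\bf C}^{1/2}$ is Hermitian) is a scalar $\mathcal{CN}\!\left(0,\,{\bf w}_k^H {\bf C}{\bf w}_k\right)$ random variable, so that $H_{k,\ell}$ is conditionally exponential with mean ${\bf w}_k^H {\bf C}{\bf w}_k$. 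Because $\|{\bf w}_k\|_2 = 1$ and $\mu_1$ is the largest eigenvalue of ${\bf C}$, the Rayleigh-quotient bound gives ${\bf w}_k^H {\bf C}{\bf w}_k \le \mu_1$, hence for every $x>0$,
\begin{align}
\mathbb{P}\!\left[H_{k,\ell} > x \mid {\bf w}_k\right] = \exp\!\left(-\frac{x}{{\bf w}_k^H {\bf C}{\bf w}_k}\right) \le \exp\!\left(-\frac{x}{\mu_1}\right),
\end{align}
and taking the expectation over ${\bf w}_k$ removes the conditioning while preserving the inequality, giving the stated tail bound.

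The only point genuinely requiring care is the rank-deficient case $r<N_{\rm r}$: a priori ${\bf w}_k^H {\bf C}{\bf w}_k$ could vanish and the conditional exponential law would be ill-posed. This does not occur almost surely, because ${\bf h}_{k,k}={\bf C}^{1/2}{\bf \tilde h}_{k,k}$ lies in the range of ${\bf C}^{1/2}$, so ${\bf w}_k$ is a unit vector in the span of the eigenvectors of ${\bf C}$ with positive eigenvalue, whence ${\bf w}_k^H {\bf C}{\bf w}_k \ge \mu_r > 0$. The other ingredient to keep track of is the independence of ${\bf w}_k$ and ${\bf h}_{k,\ell}$ used in the conditioning step, which is inherited directly from the network model's assumption that fading vectors of distinct links are independent. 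Everything else is a routine Gaussian computation, so I expect no real obstacle beyond bookkeeping.
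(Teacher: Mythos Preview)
Your proof is correct and, for the direct link, essentially identical to the paper's: both use the eigen-decomposition ${\bf C}={\bf U}{\bf \Lambda}{\bf U}^H$ and the unitary invariance of the i.i.d.\ Gaussian vector to write $H_{k,k}=\sum_{n=1}^r \mu_n |g_n|^2$.

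For the interfering link, your argument is a cleaner variant of the paper's. The paper rewrites $H_{k,\ell}=\dfrac{|{\bf \bar h}_{k,k}^*{\bf \Lambda}{\bf \bar h}_{k,\ell}|^2}{{\bf \bar h}_{k,k}^*{\bf \Lambda}{\bf \bar h}_{k,k}}$ and then bounds it by ``selecting ${\bf \bar h}_{k,k}=[1,0,\ldots,0]^T$,'' which amounts to observing that the worst-case unit direction in the quadratic form is the top eigenvector. You instead condition on ${\bf w}_k$, note that ${\bf w}_k^H{\bf h}_{k,\ell}\sim\mathcal{CN}(0,{\bf w}_k^H{\bf C}{\bf w}_k)$, and apply the Rayleigh-quotient bound ${\bf w}_k^H{\bf C}{\bf w}_k\le\mu_1$ directly. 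This is the same idea made explicit, and your version has the additional virtue of handling the rank-deficient case $r<N_{\rm r}$ carefully (showing ${\bf w}_k^H{\bf C}{\bf w}_k\ge\mu_r>0$ a.s.\ because ${\bf w}_k$ lies in the range of ${\bf C}^{1/2}$), a point the paper does not address.
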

 \begin{proof}
See Appendix \ref{proof:lem_corr}
\end{proof}

\subsubsection{A Lower Bound}
Leveraging Lemma \ref{lem:correlation}, we now derive a lower bound on the sum spectral efficiency. The corresponding upper bound is obtained when the antennas are uncorrelated, which is given in Theorem \ref{Th1}.

\begin{theorem} \label{Th5}
Assume that the correlation matrix ${\bf C} \in \mathbb{R}^{N_{\rm r}\times N_{\rm r}}$ has non-zero eigenvalues $\{\mu_1,\ldots,\mu_{{\rm rk}({\bf C})}\}$ where ${\rm rk}({\bf C})\leq N_{\rm r}$. The sum spectral efficiency with direct CSIR is lower bounded by
\begin{align}
C_{\Sigma}^{{\rm mrc}}  \geq \lambda \int_{1}^{R_{\rm d}}\int_{0}^{\infty}\frac{\left( 1- \prod_{n=1}^{{\rm rk}({\bf C})} \frac{1}{1+\mu_n zx^{-\alpha}}\right)e^{-\frac{\mu_1^{\frac{2}{\alpha}}\lambda \pi }{\sinc\left(\frac{2}{\alpha}\right)}z^{\frac{2}{\alpha}} }}{z \exp\left(\frac{z}{{\rm SNR}}\right)  } \d z \frac{2x}{R_{{\rm d}}^2-1}\d x.
 \end{align}
\end{theorem}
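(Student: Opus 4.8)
The plan is to follow the same route as the proof of Theorem~\ref{Th1}, i.e.\ to invoke Lemma~\ref{lem1} with $X=H_{k,k}d_{k,k}^{-\alpha}$, $Y=I_k$ and $a=\tfrac{1}{{\rm SNR}}$, but now feeding it the correlated fading statistics of Lemma~\ref{lem:correlation} rather than the i.i.d.\ ones. After conditioning on the direct-link distance $d_{k,k}=x$, the first part of Lemma~\ref{lem:correlation} gives, exactly,
\begin{align}
\mathbb{E}\big[e^{-zX}\big]=\mathbb{E}\big[e^{-zH_{k,k}x^{-\alpha}}\big]=\prod_{n=1}^{{\rm rk}({\bf C})}\frac{1}{1+\mu_n zx^{-\alpha}},
\end{align}
which is what produces the factor $1-\prod_{n}(1+\mu_n zx^{-\alpha})^{-1}$ in the integrand. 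For the interference factor I would use that, conditionally on the combiner ${\bf w}_k$, each interferer gain has the form $H_{k,\ell}=\rho\,E_\ell$ with $\rho:={\bf w}_k^{*}{\bf C}{\bf w}_k\in(0,\mu_1]$ and $E_\ell$ i.i.d.\ unit-mean exponential (this is the computation behind Lemma~\ref{lem:correlation}); then the probability generating functional of the interfering PPP (intensity $\lambda$ by Slivnyak) yields
\begin{align}
\mathbb{E}\!\left[e^{-zI_k}\mid{\bf w}_k\right]
&=\exp\!\left(-\lambda\!\int_{\mathbb{R}^2}\!\Big(1-\frac{1}{1+z\rho\|v\|^{-\alpha}}\Big)\d v\right)
=\exp\!\left(-\frac{\lambda\pi(z\rho)^{2/\alpha}}{\sinc(2/\alpha)}\right) \nonumber\\
&\ge\ \exp\!\left(-\frac{\lambda\pi\,\mu_1^{2/\alpha}}{\sinc(2/\alpha)}\,z^{2/\alpha}\right),
\end{align}
the inequality using $\rho\le\mu_1$ and monotonicity of $t\mapsto t^{2/\alpha}$ (with $\alpha>2$ keeping $\sinc(2/\alpha)>0$ and the integral finite).

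The one real obstacle is that $H_{k,k}$ and $I_k$ are \emph{not} independent: both are built from ${\bf h}_{k,k}$ via ${\bf w}_k$, so Lemma~\ref{lem1} cannot be applied to the pair $(X,Y)$ directly. I would circumvent this by conditioning on ${\bf w}_k$ (equivalently, on the direction of ${\bf h}_{k,k}$). Given ${\bf w}_k$ the scalar $\rho$ is deterministic, $I_k=\rho\sum_\ell E_\ell d_{k,\ell}^{-\alpha}$ is a function of the interferer point process and the (independent) interferer fadings only, while $X=\|{\bf h}_{k,k}\|_2^2\,x^{-\alpha}$ is a function of $\|{\bf h}_{k,k}\|_2^2$ only; hence $X$ and $Y$ are \emph{conditionally} independent and Lemma~\ref{lem1} applies for each fixed ${\bf w}_k$. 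Because the lower bound on $\mathbb{E}[e^{-zI_k}\mid{\bf w}_k]$ above is uniform in ${\bf w}_k$, after applying Lemma~\ref{lem1} conditionally, replacing the interference factor by that uniform bound, and moving $\mathbb{E}_{{\bf w}_k}$ back inside the $z$-integral (Tonelli, with $\mathbb{E}_{{\bf w}_k}[\mathbb{E}[e^{-zX}\mid{\bf w}_k]]=\mathbb{E}[e^{-zX}]$), one obtains for every $x\in[1,R_{\rm d}]$
\begin{align}
&\mathbb{E}\big[\ln(1+\textrm{SINR}_k^{\rm mrc})\mid d_{k,k}=x\big] \nonumber\\
&\qquad\ge\ \int_0^{\infty}\frac{e^{-z/{\rm SNR}}}{z}\left(1-\prod_{n=1}^{{\rm rk}({\bf C})}\frac{1}{1+\mu_n zx^{-\alpha}}\right)\exp\!\left(-\frac{\lambda\pi\,\mu_1^{2/\alpha}}{\sinc(2/\alpha)}\,z^{2/\alpha}\right)\d z.
\end{align}

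To conclude I would multiply by $\tfrac{\lambda}{\ln 2}$ and average over $d_{k,k}$, which is uniform in the annulus and hence has density $\tfrac{2x}{R_{\rm d}^2-1}$ on $[1,R_{\rm d}]$; this gives the stated bound. A useful sanity check: taking ${\bf C}={\bf I}$ (so $\mu_n\equiv1$, ${\rm rk}({\bf C})=N_{\rm r}$, $\rho\equiv1$) turns both inequalities into equalities and, via $1-(1+y)^{-N_{\rm r}}=\sum_{n=1}^{N_{\rm r}}\binom{N_{\rm r}}{n}y^n/(1+y)^{N_{\rm r}}$ and the change of variable $u\propto z^{2/\alpha}$, recovers exactly Theorem~\ref{Th1}; so the only slack is the substitution $\rho\le\mu_1$, which is precisely how correlation inflates the effective interference. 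Apart from the directional-conditioning argument, every remaining ingredient — the Laplace transform of $H_{k,k}$, the PPP interference Laplace transform, and the use of Lemma~\ref{lem1} — is routine, so I expect the write-up to be short.
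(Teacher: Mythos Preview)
Your proposal is correct and follows the same route the paper sketches: apply Lemma~\ref{lem1} as in Theorem~\ref{Th1}, but with the correlated fading statistics of Lemma~\ref{lem:correlation} in place of the i.i.d.\ ones. In fact you are more careful than the paper, whose one-line proof simply says to ``replace the Laplace transforms of $H_{k,k}$ and $H_{k,\ell}$'' without acknowledging that under correlation $H_{k,k}$ and $I_k$ are no longer independent (both depend on ${\bf w}_k$), so that Lemma~\ref{lem1} does not apply directly; your conditioning on ${\bf w}_k$, followed by the uniform bound $\rho={\bf w}_k^{*}{\bf C}{\bf w}_k\le\mu_1$ and the tower property to recover the unconditional Laplace transform of $H_{k,k}$, is exactly the step needed to make the argument rigorous.
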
 

\proof The proof is direct from the proof of Theorem \ref{Th1} and \ref{Th2}, replacing the Laplace transforms of $H_{k,k}$ and $H_{k,\ell}$ considering the antenna correlation, which are distributed per Lemma \ref{lem:correlation}.\endproof
 
This shows that the eigenvalues of the antenna correlation matrix affect the ergodic spectral efficiency by changing both the desired signal power and the aggregated interference power. As a special case, by setting $\mu_n=1$ for $n\in\{1,\ldots,N_{\rm r}\}$, we then recover the exact expression of the sum spectral efficiency given in Theorem \ref{Th1}. It is interesting to observe that the performance degrades as the condition number of the correlation matrix, $\kappa({\bf C})=\frac{\mu_1}{\mu_{r}}$, becomes larger. This implies that the sum spectral efficiency decreases in highly correlated antenna structures.

\begin{figure}
\centering
\includegraphics[width=4.5in]{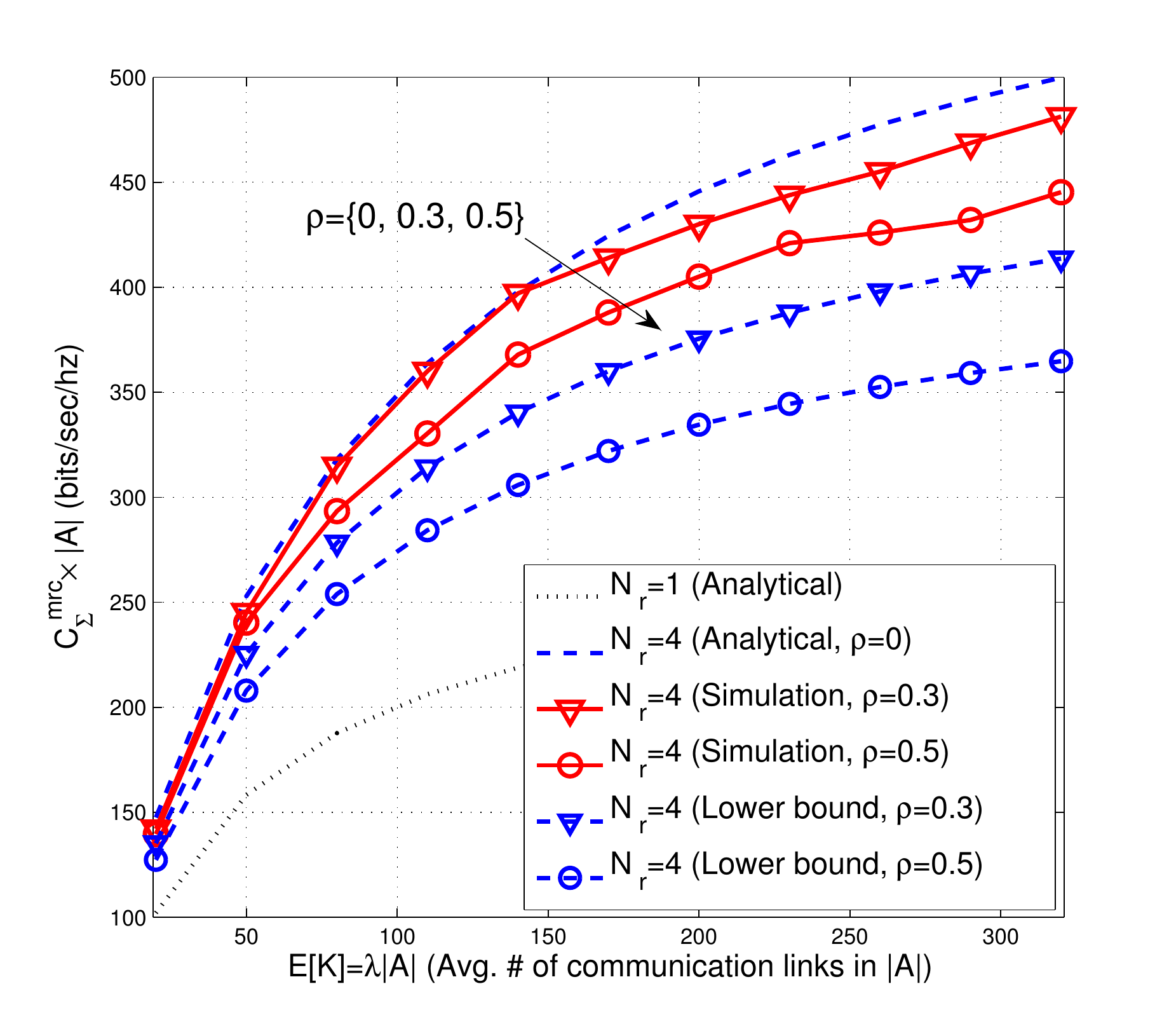} 
\caption{The sum of spectral efficiencies achieved with direct CSIR when atenna correlation is present for $|\mathcal{A}|=\pi 500^2$, $\alpha=4$, $R_{\rm d}=50 m$, and ${\rm SNR}=4.99513\times 10^9$.}
\label{fig:3}
\end{figure}

Fig. \ref{fig:3} illustrates the sum spectral efficiency when $N_{\rm r}=4$ according to different antenna correlation parameters. Using the exponential antenna correlation model in \cite{Loyka}, we define a correlation matrix ${\bf C}$ as
\begin{align}
{\bf C}=\left[%
\begin{array}{cccc}
 1 &\rho &\rho^2&\rho^3  \\
 \rho &1&\rho& \rho^2  \\
 \rho^2 &\rho& 1&\rho  \\
 \rho^3 &\rho^2& \rho & 1  \\
\end{array}%
\right],
\end{align}
where $\rho$ denotes a correlation parameter between two adjacent receive antennas. As shown in Fig. \ref{fig:3}, the sum spectral efficiency decreases as the antenna  correlation value increases. It is notable that in the lower density regime, the sum spectral efficiency degradation due to antenna correlation is negligible. Whereas, in the denser density regime, the antenna correlation deteriorates the performance.  

\subsubsection{Scaling Law} 
We derive a lower bound on scaling law when direct CSIR is known, considering the antenna correlation effects.

 \begin{corollary}[Scaling law with antenna correlation] \label{cor1} Assume that  ${\rm rk}({\bf C})= c\lambda^{\beta}$ for some $c>0$.
The ergodic spectral efficiency of a typical link asymptotically scales as follows:
\begin{align}
 \frac{C_{\Sigma}^{{\rm mrc}} }{\lambda}=\Theta\left(\log_2\left(1+\lambda^{\beta-\frac{\alpha}{2}}\right)\right),
\end{align} 
as $\lambda \rightarrow \infty$.
\end{corollary}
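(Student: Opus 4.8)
The plan is to prove the claim by sandwiching $C_{\Sigma}^{{\rm mrc}}/\lambda$ between an $\Omega\!\left(\log_2(1+\lambda^{\beta-\frac{\alpha}{2}})\right)$ lower bound and an $O\!\left(\log_2(1+\lambda^{\beta-\frac{\alpha}{2}})\right)$ upper bound, in both cases reusing almost verbatim the arguments behind Theorems \ref{Th1}, \ref{Th2} and \ref{Th5}. The only structural change is that the number of receive antennas $N_{\rm r}$ is everywhere replaced by the rank ${\rm rk}({\bf C})$, while the extreme eigenvalues $\mu_1$ and $\mu_{{\rm rk}({\bf C})}$ enter only as multiplicative constants. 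For the $\Theta$ to be clean I will use, as is implicit in the statement, that the nonzero eigenvalues of ${\bf C}$ stay bounded away from $0$ and $\infty$ as $\lambda\to\infty$ --- equivalently ${\rm tr}({\bf C})=\Theta({\rm rk}({\bf C}))$ and $\kappa({\bf C})=\mu_1/\mu_{{\rm rk}({\bf C})}=O(1)$; otherwise the same argument goes through but carries the ratio $\mu_{{\rm rk}({\bf C})}/\mu_1$ (resp.\ $\kappa({\bf C})$) inside the logarithm.

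For the lower bound I would start from the inequality of Theorem \ref{Th5} in the interference-limited regime (${\rm SNR}\to\infty$) and use $\mu_n\ge\mu_{{\rm rk}({\bf C})}$ to replace $1-\prod_{n=1}^{{\rm rk}({\bf C})}(1+\mu_n z x^{-\alpha})^{-1}$ by the smaller $1-(1+\mu_{{\rm rk}({\bf C})}z x^{-\alpha})^{-{\rm rk}({\bf C})}$. The resulting double integral is exactly the one handled in the proof of Theorem \ref{Th2}, with $N_{\rm r}\mapsto{\rm rk}({\bf C})$, the substitution $z\mapsto\mu_{{\rm rk}({\bf C})}z$ in the signal factor, and $\lambda\pi\mapsto\mu_1^{2/\alpha}\lambda\pi$ in the interference Laplace transform. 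Equivalently, one may bypass Theorem \ref{Th5} and rerun the chain (a)--(d) of that proof directly, using that by Lemma \ref{lem:correlation} $H_{k,k}$ stochastically dominates $\mu_{{\rm rk}({\bf C})}$ times a chi-squared variable with $2\,{\rm rk}({\bf C})$ degrees of freedom --- so that, by the computation of (\ref{eq:mean_sig_mrc}) with ${\rm rk}({\bf C})$ in place of $N_{\rm r}$ together with $e^{\ln 2+\psi(\cdot)}\ge(\cdot)-1$, one has $e^{\mathbb{E}[\ln H_{k,k}]}\ge\mu_{{\rm rk}({\bf C})}({\rm rk}({\bf C})-1)$ --- while each interfering fading is stochastically dominated by an exponential of mean $\mu_1$. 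Jensen's inequality on $\mathbb{E}[d_{k,k}^2]=\tfrac{R_{\rm d}^2+1}{2}$, the change of variables of the proof of Theorem \ref{Th2}, and the bound $e^{-u}\ge\tfrac{2}{\alpha}e^{-u^{\alpha/2}}$ then give
\begin{align}
\frac{C_{\Sigma}^{{\rm mrc}}}{\lambda}\ \ge\ \frac{2}{\alpha}\log_2\!\left(1+\bigl(2\sinc(\tfrac{2}{\alpha})\bigr)^{\alpha/2}\,\frac{\mu_{{\rm rk}({\bf C})}}{\mu_1}\,\frac{{\rm rk}({\bf C})-1}{\bigl(\lambda\pi(R_{\rm d}^2+1)\bigr)^{\alpha/2}}\right),
\end{align}
and substituting ${\rm rk}({\bf C})=c\lambda^{\beta}$ yields $C_{\Sigma}^{{\rm mrc}}/\lambda=\Omega\!\left(\log_2(1+\lambda^{\beta-\frac{\alpha}{2}})\right)$.

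For the upper bound I would apply the right-hand inequality of Lemma \ref{lem2} with $X=H_{k,k}d_{k,k}^{-\alpha}$ and $Y=I_k$, obtaining $C_{\Sigma}^{{\rm mrc}}\le\lambda\log_2\bigl(1+\mathbb{E}[d_{k,k}^{-\alpha}]\,\mathbb{E}[H_{k,k}]\,\mathbb{E}[1/I_k]\bigr)$. Here $\mathbb{E}[H_{k,k}]=\sum_{n=1}^{{\rm rk}({\bf C})}\mu_n={\rm tr}({\bf C})\le\mu_1\,{\rm rk}({\bf C})$ by Lemma \ref{lem:correlation}, and $\mathbb{E}[d_{k,k}^{-\alpha}]$ is the same constant as in the proof of Theorem \ref{Th2}. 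For the negative moment of the interference, note that the MRC combiner is proportional to ${\bf h}_{k,k}={\bf C}^{1/2}{\bf \tilde h}_{k,k}$ and hence lies in ${\rm range}({\bf C})$, so conditionally on it each $H_{k,\ell}$ is exponential with mean ${\bf w}_k^{H}{\bf C}{\bf w}_k\ge\mu_{{\rm rk}({\bf C})}$; thus the interfering fadings stochastically dominate exponentials of mean $\mu_{{\rm rk}({\bf C})}$, and the computation leading to (\ref{eq:inter_upper}), with the mean-one fading there replaced by mean $\mu_{{\rm rk}({\bf C})}$, gives
\begin{align}
\mathbb{E}\!\left[\frac{1}{I_k}\right]\ \le\ \frac{\Gamma\!\left(1+\tfrac{\alpha}{2}\right)\sinc\!\left(\tfrac{2}{\alpha}\right)^{\alpha/2}}{\bigl(\mu_{{\rm rk}({\bf C})}^{2/\alpha}\lambda\pi\bigr)^{\alpha/2}}.
\end{align}
Collecting the three factors and using ${\rm rk}({\bf C})=c\lambda^{\beta}$ with $\kappa({\bf C})=O(1)$ gives $C_{\Sigma}^{{\rm mrc}}/\lambda\le\log_2(1+\text{const}\cdot\lambda^{\beta-\frac{\alpha}{2}})=O\!\left(\log_2(1+\lambda^{\beta-\frac{\alpha}{2}})\right)$. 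Combining the two bounds establishes the stated $\Theta$.

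The step I expect to be the real obstacle is the interference side of the upper bound: Lemma \ref{lem:correlation} only furnishes an \emph{upper} tail bound on $H_{k,\ell}$, whereas controlling $\mathbb{E}[1/I_k]$ from above requires a stochastic \emph{lower} bound on the interfering fadings. The resolution hinges on the easy but essential observation that ${\bf w}_k$ necessarily lies in ${\rm range}({\bf C})$, which forces ${\bf w}_k^{H}{\bf C}{\bf w}_k\ge\mu_{{\rm rk}({\bf C})}$; without it one can only conclude ${\bf w}_k^{H}{\bf C}{\bf w}_k\ge 0$, which is useless. A secondary point to be explicit about is that the bare hypothesis ${\rm rk}({\bf C})=c\lambda^{\beta}$ is not quite enough for a clean $\Theta$ --- one also needs the eigenvalues bounded above and below, so that ${\rm tr}({\bf C})$ and $\kappa({\bf C})$ are of the right order --- which is the honest form in which this scaling law should be read.
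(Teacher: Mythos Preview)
Your proposal is correct and, on the whole, more careful than the paper's own proof. The paper proceeds differently on both sides. For the lower bound the paper does not use the stochastic domination $H_{k,k}\ge\mu_{{\rm rk}({\bf C})}\cdot\chi^2_{2\,{\rm rk}({\bf C})}$; instead it applies the arithmetic--geometric mean inequality to $H_{k,k}=\sum_{n=1}^{{\rm rk}({\bf C})}\mu_n X_n$ to get $\mathbb{E}[\ln H_{k,k}]\ge\ln({\rm rk}({\bf C}))+\frac{1}{{\rm rk}({\bf C})}\sum_n\ln\mu_n-\gamma$, and then plugs this into the Lemma~\ref{lem2} lower bound. Your domination argument gives $e^{\mathbb{E}[\ln H_{k,k}]}\ge\mu_{{\rm rk}({\bf C})}({\rm rk}({\bf C})-1)$, the paper's AM--GM gives $e^{\mathbb{E}[\ln H_{k,k}]}\ge e^{-\gamma}\,{\rm rk}({\bf C})\,(\prod_n\mu_n)^{1/{\rm rk}({\bf C})}$; both yield the same scaling once the nonzero eigenvalues are bounded above and below, but the AM--GM route captures the geometric mean of the spectrum rather than only its smallest element. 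For the upper bound the paper simply asserts that the uncorrelated case of Theorem~\ref{Th2} dominates and stops there; it does not address your point that controlling $\mathbb{E}[1/I_k]$ requires a stochastic \emph{lower} bound on the interfering fadings. Your observation that ${\bf w}_k\in{\rm range}({\bf C})$ forces ${\bf w}_k^{H}{\bf C}{\bf w}_k\ge\mu_{{\rm rk}({\bf C})}$ is exactly what is needed to make that step rigorous, and it is absent from the paper. You are also right that the bare hypothesis ${\rm rk}({\bf C})=c\lambda^{\beta}$ is insufficient for a clean $\Theta$; the paper leaves this implicit, whereas you state it.
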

 \begin{proof}
The proof uses arguments similar to those of the proof of Theorem \ref{lem2}. Since the interference power is changed by a constant factor $\mu_1$, it does not affect the scaling law. The difference is in the computation of $\mathbb{E}\left[ \ln\left({H}_{k,k}\right)\right] $ when taking the antenna correlation effect into account. Since $H_{k,k}$ is a weighted sum of exponential random variables, we can define $H_{k,k}=\sum_{n=1}^{{\rm rk}\left({\bf C}\right)}\mu_n {X}_n$ where $X_n$ is an IID exponential random variable with mean one. With this, we compute a lower bound on $\mathbb{E}\left[ \ln\left({H}_{k,k}\right)\right] $ as
\begin{align}
\mathbb{E}\left[ \ln\left({H}_{k,k}\right)\right] &=\mathbb{E}\left[ \ln\left(\sum_{n=1}^{{\rm rk}\left({\bf C}\right)}\mu_n {X}_n\right)\right] \nonumber\\
&\stackrel{(a)}{\geq} \mathbb{E}\left[ \ln\left\{{\rm rk}\left({\bf C}\right)\left(\prod_{n=1}^{{\rm rk}\left({\bf C}\right)} \mu_n{X}_n\right)^{\frac{1}{r}}\right\}\right] \nonumber\\
&=\ln({\rm rk}\left({\bf C}\right))+\frac{1}{{\rm rk}\left({\bf C}\right)}\sum_{n=1}^{{\rm rk}\left({\bf C}\right)} \left\{\ln(\mu_n)+\mathbb{E}\left[\ln(X_n)\right]\right\}\nonumber\\
&\stackrel{(b)}{=}\ln({\rm rk}\left({\bf C}\right))+\frac{1}{{\rm rk}\left({\bf C}\right)}\sum_{n=1}^{{\rm rk}\left({\bf C}\right)} \ln(\mu_n) -\gamma, \label{eq:low_desired_correlation}
\end{align}
where (a) follows from the arithmetic geometric mean inequality and (b) follows from the fact that $X_n$ is Chi-squared distributed with degrees of freedom two and the definition of the Euler-Mascheroni's constant $\gamma$. With this, for the given $I_k$, a lower bound on the sum spectral efficiency is given by
\begin{align}
\mathbb{E}\left[\log_2\left(1+\frac{H_{k,k}}{I_k}\right)\right]&\geq \log_2\left(1+\frac{e^{\mathbb{E}[\ln(H_{k,k})]}}{I_k}\right) \nonumber \\
&= \lambda \log_2\left(1+e^{-\gamma} e^{\frac{\sum_{n=1}^{{\rm rk}\left({\bf C}\right)} \ln(\mu_n)}{{\rm rk}\left({\bf C}\right)}} \frac{ {\rm rk}\left({\bf C}\right)}{I_k}\right). \label{eq:lower_antecor}
\end{align}
Using the same argument as in the proof of Theorem \ref{Th2}, it is possible to show that the lower bound in (\ref{eq:lower_antecor}) scales as $\lim_{\lambda \rightarrow \infty}\frac{C_{\Sigma}^{{\rm mrc}} }{\lambda}=\Omega\left(\log_2\left(1+\lambda^{\beta-\frac{\alpha}{2}}\right)\right)$. Since the upper bound corresponds to the case of no antenna correlation shown in Theorem \ref{Th2}, we complete the proof of Corollary \ref{cor1}.%
%
\end{proof}

Corollary \ref{cor1} demonstrates that antenna correlation does not affect the scaling law of the network if the correlation matrix has a full rank. Nevertheless, the antenna correlation decreases the SIR by the factor of $e^{\frac{\sum_{n=1}^{{\rm rk}\left({\bf C}\right)} \ln(\mu_n)}{{\rm rk}\left({\bf C}\right)}-\gamma}$ compared to the uncorrelated case as observed in Fig. \ref{fig:3}.

\subsection{Effect of the Path Loss Function}
Up to this point, we have characterized the scaling law by using the path loss function $\|x\|^{-\alpha}$. In this section, we analyze the impact on the achievable scaling law of using the bounded path loss function ($\min\left\{1,\|x\|^{-\alpha} \right\}$). Unlike the path loss function used in the previous sections, this bounded path loss function ensures that the mean power of the aggregated interference is finite. Using this bounded path loss function, we provide the scaling law in networks with direct CSIR in the following corollary. 

 \begin{corollary} \label{cor2} In the interference limited regime, if $N_{\rm r}=c\lambda^{\beta}$ with $c>0$, then 
\begin{align}
\frac{C_{\Sigma}^{{\rm mrc}}}{\lambda} &=\Omega\left(\log_2\left(1+\lambda^{\beta-\frac{\alpha}{2}}\right)\right),
\end{align} 
as $\lambda \rightarrow \infty$.
\end{corollary}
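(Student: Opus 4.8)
The plan is to reuse the lower-bound argument of Theorem~\ref{Th2}, namely applying Lemma~\ref{lem2} to the SINR of the typical link and then inserting first-moment estimates; the only genuine change introduced by the bounded path loss $\min\{1,\|x\|^{-\alpha}\}$ is that the aggregated interference now has a \emph{finite} mean, so the denominator can be controlled directly by $\mathbb{E}[\tilde I_k]$ instead of through the Laplace transform of $d_{k,k}^{\alpha}\tilde I_k$ as was needed for Theorem~\ref{Th2}. First I would record the elementary fact that, since every receiver lies in the annulus $[1,R_{\rm d}]$, one has $d_{k,k}\geq 1$ and hence $\min\{1,d_{k,k}^{-\alpha}\}=d_{k,k}^{-\alpha}$: bounding the path loss leaves the desired-signal term $H_{k,k}d_{k,k}^{-\alpha}$ untouched and only caps the near-field interferer contributions.

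In the interference-limited regime take $X=H_{k,k}$ and $Y=d_{k,k}^{\alpha}\tilde I_k$ with $\tilde I_k=\sum_{\ell\neq k}H_{k,\ell}\min\{1,d_{k,\ell}^{-\alpha}\}$; these are independent, so Lemma~\ref{lem2} together with the bound $e^{\ln 2+\psi(N_{\rm r}/2)}\geq N_{\rm r}-1$ already used for Theorem~\ref{Th2} gives
\begin{align}
C_{\Sigma}^{{\rm mrc}}\;\geq\;\lambda\log_2\!\left(1+\frac{e^{\mathbb{E}[\ln H_{k,k}]}}{\mathbb{E}[d_{k,k}^{\alpha}]\,\mathbb{E}[\tilde I_k]}\right)\;\geq\;\lambda\log_2\!\left(1+\frac{N_{\rm r}-1}{\mathbb{E}[d_{k,k}^{\alpha}]\,\mathbb{E}[\tilde I_k]}\right).
\end{align}
Here $\mathbb{E}[d_{k,k}^{\alpha}]$ is a finite constant obtained from the density $\tfrac{2r}{R_{\rm d}^2-1}$ on $[1,R_{\rm d}]$, and the crucial point is that by Campbell's theorem (the first-order moment measure of the PPP)
\begin{align}
\mathbb{E}[\tilde I_k]=\lambda\!\int_{\mathbb{R}^2}\!\min\{1,\|x\|^{-\alpha}\}\,\d x=\lambda\,\pi\,\frac{\alpha}{\alpha-2}=\Theta(\lambda),
\end{align}
in contrast with the much larger $\Theta(\lambda^{\alpha/2})$ effective interference level that governs the unbounded case. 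Substituting $N_{\rm r}=c\lambda^{\beta}$ and absorbing the $\lambda$-independent constants, the bound becomes $C_{\Sigma}^{{\rm mrc}}/\lambda\geq\log_2\!\big(1+\Theta(\lambda^{\beta-1})\big)$, i.e.\ $C_{\Sigma}^{{\rm mrc}}/\lambda=\Omega\!\big(\log_2(1+\lambda^{\beta-1})\big)$; since $\alpha>2$ gives $\beta-1\geq\beta-\tfrac{\alpha}{2}$ for every $\lambda\geq 1$, this is in particular $\Omega\!\big(\log_2(1+\lambda^{\beta-\alpha/2})\big)$, which proves the statement and shows, moreover, that linear growth $C_{\Sigma}^{{\rm mrc}}=\Omega(\lambda)$ is already reached at $\beta=1$.

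I do not anticipate a real obstacle; the argument is strictly lighter than that of Theorem~\ref{Th2}. The only items needing care are verifying the hypotheses of Lemma~\ref{lem2} (independence is immediate from the model; $\mathbb{E}[\tilde I_k]<\infty$ is precisely the effect of the bounded kernel, and $\mathbb{E}[1/Y]\leq\mathbb{E}[1/\tilde I_k]<\infty$ follows from $d_{k,k}\geq 1$ together with the PPP Laplace functional decaying fast enough in $s$) and the explicit constant $\mathbb{E}[d_{k,k}^{\alpha}]$. As a sanity check---and a shortcut that yields exactly the stated bound with no computation---one may also note the pointwise domination $\tilde I_k\leq I_k$ (because $\min\{1,d^{-\alpha}\}\leq d^{-\alpha}$), so the bounded-path-loss spectral efficiency is at least the unbounded one of Theorem~\ref{Th2}; this route, however, discards the sharper $\lambda^{\beta-1}$ exponent and hence the observation that a merely linear number of antennas suffices for linear scaling.
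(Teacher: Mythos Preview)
Your proposal is correct and follows essentially the same route as the paper: apply the lower bound of Lemma~\ref{lem2}, keep the desired-signal factor via $e^{\mathbb{E}[\ln H_{k,k}]}\geq N_{\rm r}-1$, and bound the denominator by Campbell's theorem, which for the bounded kernel gives $\mathbb{E}[\tilde I_k]=2\pi\lambda\big(\tfrac12+\tfrac{1}{\alpha-2}\big)=\lambda\pi\tfrac{\alpha}{\alpha-2}$, exactly the constant the paper obtains. The only cosmetic difference is that the paper groups $d_{k,k}^{-\alpha}$ with the numerator and uses $e^{-\alpha\,\mathbb{E}[\ln d_{k,k}]}$, whereas you group $d_{k,k}^{\alpha}$ with the denominator and use $\mathbb{E}[d_{k,k}^{\alpha}]$; both yield $\Theta(\lambda^{\beta-1})$ inside the logarithm and hence the stated $\Omega\big(\log_2(1+\lambda^{\beta-\alpha/2})\big)$. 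Your closing remark that $\tilde I_k\leq I_k$ pointwise already delivers the corollary from Theorem~\ref{Th2}---at the price of losing the sharper exponent $\beta-1$---is a valid shortcut the paper does not mention.
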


\proof

From Lemma \ref{lem2}, the sum spectral efficiency is upper bounded as follows:
\begin{align}
&\lambda\mathbb{E}_{H_{k,k},d_{k,k},I_k}\left[\log_2\!\left(1+\frac{H_{k,k}d_{k,k}^{-\alpha}}{I_k}\right) \right]\geq\lambda \log_2\left(1\!+\!\frac{ e^{\mathbb{E}\left[\ln(H_{k,k})\right]-\alpha\mathbb{E}\left[\ln(d_{k,k})\right]}}{ \mathbb{E}\left[I_k\right] }\right).\label{eq:lower_2}
\end{align}
Using the facts that $H_{k,k}$ is a Chi-squared random variable with $2N_{\rm r}$ degrees of freedom, $d_{k,k}$ is uniformly distributed in a disk with radius $R_{\rm d}$, and they are mutually independent, we obtain the following inequality:
\begin{align}
 e^{\mathbb{E}\left[\ln(H_{k,k})\right]-\alpha\mathbb{E}\left[\ln(d_{k,k})\right]}
\geq   (N_{\rm r}-1)e^{\frac{\alpha}{2R_{\rm d}^2}-\frac{\alpha}{2}}.\label{eq:mean_sig_mrc}\end{align}

Next, we compute the mean of the interference power. 
Since we consider a non-singular path loss model, the expectation of the aggregated interference power is bounded. Applying Campbell's theorem \cite{Baccelli_book}, the mean of the aggregated interference power is given by
\begin{align}
\mathbb{E} \left[I_k \right] &=\!\b{E}\left[\!\sum_{d_{k,k_{j}}\in {\Phi}} H_{k,k_j}\min\left\{1,d_{k,k_j}^{-\alpha}\right\} \right] \nonumber \\
&=\lambda 2\pi\mathbb{E}[H_{k,k_j}]\left[\int_{0}^1 r \d r+ \int_{1}^{\infty}r^{-\alpha} r \d r \right] \nonumber\\
&= 2\pi \lambda\left(\frac{1}{2}+\frac{1}{\alpha-2}\right).\label{eq:avg_Intpower_mrc}
\end{align} 
 
Invoking (\ref{eq:mean_sig_mrc}) and (\ref{eq:avg_Intpower_mrc}) into (\ref{eq:lower_2}), we finally obtain the lower bound
\begin{align}
&C_{\Sigma}^{{\rm mrc}}\geq \lambda\log_2\!\left(1\!+\!\frac{(N_{\rm r}-1)e^{\frac{\alpha}{2R_{\rm d}^2}-\frac{\alpha}{2}} }{ 2\pi \lambda\left(\frac{1}{2}+\frac{1}{\alpha-2}\right)   }\right).
\end{align}
Since we assumed $N_{\rm r}=\lambda^{\beta}$, as the density goes to infinity, we get
\begin{align}
\lim_{\lambda \rightarrow \infty}\frac{C_{\Sigma}^{{\rm mrc}}}{\lambda} &\geq  \log_2\!\left(1\!+\!\frac{ e^{\frac{\alpha}{2R_{\rm d}^2}-\frac{\alpha}{2}} }{ 2\pi\left(\frac{1}{2}+\frac{1}{\alpha-2}\right) } \lambda^{\beta-1} \right),
\end{align}
which completes the proof.

\endproof

This corollary contrasts with the capacity scaling result given in Theorem \ref{Th1}. In Theorem \ref{Th1}, we showed that the number of receive antennas should increase super-linearly with the density to maintain the linear scaling law with the direct CSIR. Corollary \ref{cor2}, however, shows that the linear scaling of the sum spectral efficiency is possible with knowledge of CSIR for the direct links, whenever the number of receive antennas scales linearly with the density of links. These different results show that the network performance strongly depends on the chosen path-loss function. A similar observation was reported in recent work \cite{Xinchen_multi_slop} where multi-slope path loss functions change the coverage probability as a function of the base station density in cellular downlink networks.

\section{Conclusion} \label{Section6}
 We considered a network with multiple receive antennas and explored the benefits of exploiting multiple antennas in terms of the sum spectral efficiency. Under two different CSIR assumptions, we derived exact analytical expressions and scaling laws by deriving closed from upper and lower bounds on the sum spectral efficiency. One major implication from our results is that the sum spectral efficiency improves linearly with the density of links when the number of antennas scales with the density in a particular super-linear way. This super linear growth conclusion holds under the assumption of a power law attenuation. For the bounded attenuation, the super linear growth can be replaced by a linear one. When local CSIR is exploited, the sum spectral efficiency improves linearly with the multiplicative factor given by the path loss when the number of receive antenna scales with the density in a linear way. Further, we verified that for correlated channels, a linear scaling is still achievable with direct CSIR as long as the rank of a spatial correlation matrix scales super-linearly with the density. These results show that using multiple antennas is  useful in controlling interference in a  distributed way; thereby providing significant gains in the network scaling.

An interesting direction for future study would be to explore the effects of having multiple antennas at transmitters. For example, the transmit array can be used for maximum ratio transmission or to apply nulls to nearby interferers \cite{Akoum}. Further, by combining interference-aware scheduling algorithms, it would also be interesting to characterize the achievable rate when multiple antennas are used in by transmitters and receivers. Another interesting direction is to extend the results to cellular networks by changing the direct link distance distribution and interference guard regions appropriately. It would also be interesting to consider millimeter wave operation where channel blockages are important and antenna arrays are used only for beamforming \cite{Bai}.

\appendices

%


\section{Proof of Theorem \ref{Th1}}\label{proof:Th1}
Conditioning by $d_{k,k}=d$, leveraging Lemma \ref{lem1}, the sum spectral efficiency of the $k$th  link can be written in the following integral form:
\begin{align}
&\mathbb{E}\left[\log_2\left(1+\frac{H_{k,k}}{d^{\alpha}I_k+\frac{d^{\alpha}}{{\rm SNR}}}\right)\mid d_{k,k,}=d\right] \nonumber \\&=\frac{1}{\ln(2)}\int_{0}^{\infty}\frac{e^{-\frac{d^{\alpha}}{{\rm SNR}}z}}{z}\left(1-\mathbb{E}\left[e^{-zH_{k,k}}\right]\right)\mathbb{E}\left[e^{-zd^{\alpha}I_k}\right] \d z, \label{eq:condi_ergodic_rate}
\end{align}
where the expectations are taken over $H_{k,k}$ and $I_k$. The Laplace transform of the aggregated interference power $I_k$ evaluated at $d^{\alpha}z$ is computed as
\begin{align}
\mathbb{E}\left[e^{-zd^{\alpha}I_k}\right] &=\mathbb{E}\!\left[e^{-z d^{\alpha} \sum_{\ell \!=1}^{\infty}\!H_{k,k_{\ell}}d_{k,k_{\ell}}^{-\alpha} }  \right]\nonumber  \\
&\stackrel{(a)}{=} \mathbb{E}\left[ \prod_{d_{k,k_{\ell}}\in \Phi} \!\!e^{-z d^{\alpha} H_{k,k_{\ell}}d_{k,k_{\ell}}^{-\alpha}}  \!\right] \nonumber  \\
&\stackrel{(b)}{=}  \mathbb{E}\left[ \prod_{d_{k,k_{\ell}}\in \Phi }  \frac{1}{1+z d^{\alpha} d_{k,k_{\ell}}^{-\alpha}} \right]\nonumber \\
&\stackrel{(c)}{=} \exp\left(-\frac{\lambda \pi d^{2} }{\sinc\left(\frac{2}{\alpha}\right)} z^{\frac{2}{\alpha}}  \right),  \label{eq:LF_inter}
\end{align}
where (a) follows from the independence of ${d}_{k,k_{\ell}}$ and $H_{k,k_j}$, (b) holds because $H_{k,k_{\ell}}$ is exponentially distributed with unit mean $\forall \ell$, (c) follows from the probability generating functional of the PPP and the definition of the sinc function. Plugging (\ref{eq:LF_inter}) into (\ref{eq:condi_ergodic_rate}), we obtain the conditional spectral efficiency of the $k$th link as
\begin{align}
&\mathbb{E}\left[\log_2\left(1+\frac{H_{k,k}}{d^{\alpha}I_k+\frac{d^{\alpha}}{{\rm SNR}}}\right)\mid \{d_{k,k,}=d\}\right] \nonumber \\&=\frac{1}{\ln(2)}\int_{0}^{\infty}\frac{e^{-\frac{d^{\alpha}z}{{\rm SNR}}} }{z}\left(1\!-\!\mathbb{E}\!\left[e^{-z H_{k,k}}\!\right]\!\right)e^{-\frac{\lambda \pi d^{2} }{\sinc\left(\frac{2}{\alpha}\right)} z^{\frac{2}{\alpha}} }  \d z \nonumber \\
&\stackrel{(a)}{=}\frac{\alpha}{2\ln(2)}\int_{0}^{\infty}\frac{e^{-\frac{d^{\alpha}\left(\frac{\sinc\left(\frac{2}{\alpha}\right)u }{\lambda \pi d^{2}}\right)^{\!\!\! \frac{\alpha}{2}}  }{{\rm SNR}}}}{u}\!\left(\!\!1\!-\!\mathbb{E}\!\left[\!e^{-\left(\frac{\sinc\left(\frac{2}{\alpha}\right)u }{\lambda \pi d^{2}}\right)^{\!\!\! \frac{\alpha}{2}} H_{k,k}}\!\right]\!\right)e^{-u }  \d u
\nonumber \\
&\stackrel{(b)}{=}\frac{\alpha}{2\ln(2)}\int_{0}^{\infty}\frac{e^{-\frac{\left(\frac{\sinc\left(\frac{2}{\alpha}\right)u }{\lambda \pi  }\right)^{\!\!\! \frac{\alpha}{2}}  }{{\rm SNR}}}}{u}\!\left(\!1\!- \frac{1}{\left(1+\left(\frac{\sinc\left(\frac{2}{\alpha}\right)u }{\lambda \pi d^{2}}\right)^{\!\!\! \frac{\alpha}{2}}  \right)^{N_{\rm r}}}\right)e^{-u }  \d u\nonumber \\
&\stackrel{(c)}{=}\frac{\alpha}{2\ln(2)}\int_{0}^{\infty}\frac{e^{-\frac{\left( \sinc\left(\frac{2}{\alpha}\right)u    \right)^{\! \frac{\alpha}{2}}  }{(\lambda \pi)^{\frac{\alpha}{2}}{\rm SNR}}-u}}{u}  \frac{\sum_{n=1}^{N_{\rm r}} \binom{N_{\rm{r}}}{n} \left(\frac{\sinc\left(\frac{2}{\alpha}\right)u }{\lambda \pi d^{2}}\right)^{\!\!\! n\frac{\alpha}{2}}   }{\left(1+\left(\frac{\sinc\left(\frac{2}{\alpha}\right)u }{\lambda \pi d^{2}}\right)^{\!\!\! \frac{\alpha}{2}}  \right)^{N_{\rm r}}}  \d u,
\label{eq:condi_ergodic_rate_2}
\end{align}
where (a) follows from the variable change:
\begin{equation}
u=\frac{\lambda \pi d^{2} }{\sinc\left(\frac{2}{\alpha}\right)} z^{\frac{2}{\alpha}},
\end{equation} (b) is due to the fact that $H_{k,k}$ is distributed like a Chi-squared with $2N_{\rm r}$ degrees of freedom, and (c) follows from the binomial expansion.
Using the distribution of $d_{k,k}$, which is uniformly distributed in the area of annulus with inner radius 1 and outer radius $R_{\rm d}$, we obtain\begin{align}
&\mathbb{E}\left[\log_2\left(1+\frac{H_{k,k}d_{k,k}^{-\alpha}}{I_k+\frac{1}{{\rm SNR}}}\right)\right] \nonumber \\
&=\frac{\alpha}{2\ln(2)} \int_{1}^{R_{\rm d}} \int_{0}^{\infty}\frac{e^{-\frac{\left( \sinc\left(\frac{2}{\alpha}\right)u    \right)^{\! \frac{\alpha}{2}}  }{(\lambda \pi)^{\frac{\alpha}{2}}{\rm SNR}}-u}}{u}  \frac{\sum_{n=1}^{N_{\rm r}} \binom{N_{\rm{r}}}{n} \left(\frac{\sinc\left(\frac{2}{\alpha}\right)u }{\lambda \pi d^{2}}\right)^{\!\!\! n\frac{\alpha}{2}}   }{\left(1+\left(\frac{\sinc\left(\frac{2}{\alpha}\right)u }{\lambda \pi d^{2}}\right)^{\!\!\! \frac{\alpha}{2}}  \right)^{N_{\rm r}}}  \d u \frac{2r}{R_{\rm d}^2-1}\d r.\label{eq:condi_ergodic_rate_final}
\end{align}
This completes the proof.


\section{Proof of Lemma \ref{lem}}\label{proof:lem2}
\begin{proof} The proof relies on Jensen's inequality. 
We first focus on the lower bound. Using the facts that $X$ and $Y$ are independent and $\log_2\left(1+\frac{a}{Y}\right)$ for all $a>0$ is a convex function with respect to $Y$, we obtain a lower bound as
\begin{align}
&\mathbb{E}_{X,Y}\left[\log_2\left(1+\frac{X}{Y}\right)\right] \geq \mathbb{E}_X\left[\log_2\left(1+\frac{X}{\mathbb{E}[Y]}\right)\right].
\end{align}
Since $\log_2\left(1+be^{X}\right)$ is a convex function with respect to $X$ for $b>0$, we apply Jensen's inequality again, which yields
\begin{align}
&\mathbb{E}_{X,Y}\left[\log_2\left(1+\frac{X}{Y}\right)\right] \geq \log_2\left(1+\frac{\exp\left(\mathbb{E}[\ln(X)]\right)}{\mathbb{E}[Y]}\right). \label{eq:lower_bound_lemma}
\end{align}
This completes the proof of the lower bound.

Next, we prove the upper bound. Since $\log_2(1+aX)$ is a concave function with respect to $X>0$ for all $a>0$, we obtain the upper bound
\begin{align}
&\mathbb{E}_{X,Y}\left[\log_2\left(1+\frac{X}{Y}\right)\right] \leq  \log_2\left(1+\mathbb{E}[X]\mathbb{E}\left[\frac{1}{Y}\right]\right),\label{eq:up_1}
\end{align}
which completes the proof. 

%

 \end{proof}


\section{Proof of Theorem \ref{Th3}}\label{proof:Th3}

We prove Theorem \ref{Th3} by leveraging Lemma \ref{lem1}. From Lemma \ref{lem1}, conditioned $d_{k,k}=d$, we rewrite the sum spectral efficiency in terms of the Laplace transforms of $\tilde{H}_{k,k}$ and $\tilde{I}_k$ as
\begin{align}
&\mathbb{E}\left[\log_2\left(1\!+\!\frac{{\tilde H}_{k,k}d^{-\alpha}}{{\tilde I}_k+\frac{1}{{\rm SNR}}}\right)\!\mid \{d_{k,k}\!=\!d \}\right]\nonumber \\
&\!\!=\!\!\frac{1}{\ln(2)}\int_{0}^{\infty}\!\frac{e^{-\frac{z}{{\rm SNR}}}}{z}\left(1-\mathbb{E}\left[e^{-z  \tilde{H}_{k,k}d^{-\alpha} } \mid \{d_{k,k}\!=\!d \}\right]\right)  \mathbb{E}\left[e^{-z \sum_{j={L}+1}^{\infty}\tilde{H}_{k,k_j}d_{k,k_j}^{-\alpha} } \right] \d z. \label{eq:ergodic_expansion2}
\end{align}
Since the Laplace transform of $\tilde{H}_{k,k}$ is 
\begin{align}
\mathbb{E}\left[e^{-z  \tilde{H}_{k,k}d^{-\alpha} } \mid \{d_{k,k}\!=\!d \}\right]=\frac{1}{(1+d^{\alpha}z)^{N_{\rm r}}},
\end{align}
 we only need to compute the Laplace transform of $\tilde{I}_k$ for the given $L$, which yields,
\begin{align}
\mathcal{L}_{\tilde{I}_k}(L;z)=\mathbb{E}\!\left[e^{-z \sum_{j
={L}+1}^{\infty}\!H_{k,k_j}d_{k,k_j}^{-\alpha} }\right]\!.
\label{eq:LF_D}
\end{align}
Conditioning on the fact that the $L$th nearest interferer's distance is equal to $r$, i.e., $d_{k,k_{L}}\!\!=\!r$, the Laplace transform is computed as 
\begin{align}
\mathcal{\tilde L}_{\tilde{I}_k}(L;z)\!\!&=\mathbb{E}\left[e^{-z  \sum_{j=L+1}^{\infty}\tilde{H}_{k,k_j}d_{k,k_j}^{-\alpha}} \mid \{  d_{k,k_{\ell}}=\!r\}\right] \nonumber \\
&=  \mathbb{E}\left[ \prod_{d_{k,k_j}\in \Phi/\mathcal{B}(0,r)} \frac{1}{1+z  d_{k,k_j}^{-\alpha}}\mid \{ d_{k,k_{\ell}}=r\}\!\right]\nonumber \\
&= \exp\left({ -  \pi\lambda  \int_{r^2}^{\infty} \frac{1}{1+z^{-1}u^{\frac{\alpha}{2}}} \d u    }\right) .\label{eq:conditional_LF}
\end{align}
By unconditioning (\ref{eq:conditional_LF}) with respect to $r$ using the distribution in \cite{Haenggi}, we obtain the Laplace transform of the aggregate interference power as
\begin{align}
&\mathcal{L}_{\tilde{I}_k}(L;z)=\mathbb{E}_{d_{k,k_{L}}} \left[\mathcal{ \tilde L}_{\tilde{I}_k}(L;z)\right] \nonumber\\  
&=\!\int_{0}^{\infty}\!\!\! e^{ - \pi\lambda  \int_{r^2}^{\infty} \frac{1}{1+z^{-1}u^{\frac{\alpha}{2} } } \d u    }   \frac{2(\lambda\pi r^2)^{L}}{r \Gamma(L)} e^{-\lambda \pi r^2} \d r. \label{eq:LF_I}
\end{align}
Invoking (\ref{eq:LF_D}) and (\ref{eq:LF_I}) into (\ref{eq:ergodic_expansion2}), the conditional spectral efficiency of the $k$th link can be rewritten as 
\begin{align}
&\mathbb{E}\left[\log_2\left(1\!+\!\frac{\tilde{H}_{k,k}d^{-\alpha}}{\tilde{I}_k+\frac{1}{{\rm SNR}}}\right)\mid \{d_{k,k}\!=\!d\}\right]\nonumber \\
&\!\!=\!\!\frac{1}{\ln(2)}\int_{0}^{\infty}\!\frac{e^{-\frac{z}{{\rm SNR}}}}{z}\left\{1-\mathcal{L}_{\tilde{H}_{k,k}}(d;z)\right\}\mathcal{L}_{\tilde{I}_k}(L;z)\d z. \label{eq:ergodic_SIC_1}
\end{align}
Using the fact that $d_{k,k}$ is uniformly distributed in a disk with radius $R_{\rm d}$, i.e., $f_{d_{k,k}}(x)=\frac{2x}{R_{\rm d}^2}$ for $1\leq x\leq R_{\rm d}$, we finally obtain the result in Theorem \ref{Th3}, which completes the proof.


\section{Proof Lemma \ref{lem:correlation}} \label{proof:lem_corr}
\proof When the $k$th receiver employs the MRC strategy ${\bf w}_k^*=\frac{{\bf h}_{k,k}}{\|{\bf h}_{k,k}^*\|_2}$, the effective channel gain can be written as
\begin{align}
H_{k,k} &= \|{\bf w}_k^*{\bf h}_{k,k}\|_2^2 \nonumber \\
&=\frac{|{\bf \tilde h}_{k,k}^*{\bf C}^{1/2}{\bf C}^{1/2}{\bf \tilde h}_{k,k}|^2}{\|{\bf C}^{1/2}{\bf \tilde h}_{k,k}\|_2^2} \nonumber \\
&={\bf \tilde h}_{k,k}^*{\bf C}{\bf \tilde h}_{k,k}. \label{eq:H1}
\end{align}
Using the egienvalue decomposition of ${\bf C}$, which gives ${\bf C}={\bf U}{\bf \Lambda}{\bf U}^*$, (\ref{eq:H1}) can be rewritten as 
\begin{align}
H_{k,k}&={\bf \tilde h}_{k,k}^*{\bf U}{\bf \Lambda}{\bf U}{\bf \tilde h}_{k,k} \nonumber \\
&={\bf \bar{h}}_{k,k}^*{\bf \Lambda}{\bf \bar{h}}_{k,k} \nonumber \\
&=\sum_{n=1}^{r}\mu_n |{ \bf \bar{h}}_{k,k}(n)|^2,
\end{align}
where the second equality follows from the definition of ${\bf \bar{h}}_{k,k}={\bf U}{\bf \tilde h}_{k,k}$. Since a unitary transform does not change the distribution of elements, the $n$th element of ${\bf \bar{h}}_{k,k}$, ${ \bf \bar{h}}_{k,k}(n)$, is also $\mathcal{CN}(0,1)$. The last equality follows from the fact that ${\bf \Lambda}$ is a diagonal matrix with the entries $\left\{ \mu_1,\ldots,  \mu_{r} \right\}$. As a result,  $H_{k,k}$ is distributed as the sum of exponential random variables with means $\left\{ \mu_1,\ldots, \mu_{r} \right\}$.

Next, we characterize a simple upper bound on the distribution of $H_{k,\ell}$. With the MRC decoding strategy, the fading power for the interfering link $H_{k,\ell}$ is
\begin{align}
H_{k,\ell} &= \|{\bf w}_{k}^*{\bf h}_{k,\ell}\|_2^2 \nonumber \\
&=\frac{|{\bf \tilde h}_{k,k}^*{\bf C}^{\frac{1}{2}}{\bf C}^{\frac{1}{2}}{\bf \tilde h}_{k,\ell}|^2}{\|{\bf C}^{\frac{1}{2}}{\bf \tilde h}_{k,k}\|_2^2} \nonumber \\
&=\frac{|{\bf \bar h}_{k,k}^*{\bf \Lambda} {\bf \bar h}_{k,\ell}|^2}{{\bf \bar h}_{k,k}^*   {\bf \Lambda}{\bf \bar h}_{k,k}}, \label{eq:H2}
\end{align}
where the last equality follows from the change of basis ${\bf \bar h}_{k,k}={\bf U}{\bf \tilde h}_{k,k}$ and the distribution invariance of the unitary transformation. By selecting ${\bf \bar h}_{k,k}$ as the unit norm vectors  $[1~0~\cdots~0]^T$, this fading power is upper bounded as 
\begin{align}
 H_{k,\ell} \leq  \mu_{1}|{\bf \bar h}_{k,\ell}(1)|^2.
\end{align}
Since $|{\bf \tilde h}_{k,\ell}(1)|^2$ is distributed as an exponential random variable with mean one, the complementary cumulative distribution function (CCDF) of the fading power $H_{k,\ell}$ is upper bounded as
\begin{align}
\mathbb{P}\left[H_{k,\ell}>x\right] \leq \exp\left(-\frac{x}{\mu_{1}}\right).
\end{align}
Consequently, under antenna correlation scales, the mean of $H_{k,\ell}$ is upper bounded by the maximum eigenvalue of the correlation matrix. This completes the proof.
\endproof
 
{}

\end{document}